\DeclareMathAlphabet{\can}{OT1}{cmss}{m}{n}
\newtheorem{thm}{Theorem}[section]
\newtheorem{cor}[thm]{Corollary}
\newtheorem{lem}[thm]{Lemma}
\newtheorem{rem}[thm]{Remark}
\newtheorem{exa}[thm]{Example}
\theoremstyle{definition}
\theoremstyle{fact}
\theoremstyle{conjecture}
\numberwithin{equation}{section}
\newcommand{\Tr}{\operatorname{Tr}}
\begin{document}
\title[Optimal few-weight codes  from simplicial complexes]
{Optimal few-weight codes  from \\simplicial complexes}



\author[Y. Wu]{ Yansheng Wu}

\address{\rm Department of Mathematics, Nanjing University of Aeronautics and Astronautics, Nanjing,  211100, P. R. China; State Key Laboratory of Cryptology, P. O. Box 5159, Beijing, 100878, P. R. China.  The author is now with the Department of Mathematics, Ewha Womans
University, 52, Ewhayeodae-gil, Seodaemun-gu, Seoul, 03760, South Korea.}
\email{wysasd@163.com}

\author[X. Zhu]{Xiaomeng Zhu}
 \address{\rm Department of Mathematics, Nanjing University of Aeronautics and Astronautics,
 Nanjing,  211100, P. R. China}
 \email{mooneernanjing@163.com}

\author[Q. Yue]{Qin Yue}
\address{\rm Department of Mathematics, Nanjing University of Aeronautics and Astronautics, Nanjing,  211100, P. R. China; State Key Laboratory of Cryptology, P. O. Box 5159, Beijing, 100878, P. R. China}
\email{yueqin@nuaa.edu.cn}

\subjclass[2010]{11T71, 06A11}
\keywords{few-weight codes, codes over rings, simplicial complexes, Griesmer bound}

\date{\today}


\baselineskip=20pt

\begin{abstract}   

Recently, some infinite families of binary minimal and optimal  linear codes are constructed from simplicial complexes by Hyun {\em et al}.  
Inspired by their work, we present two new constructions of codes over the ring $\Bbb F_2+u\Bbb F_2$ by employing simplicial complexes. When the simplicial complexes are all generated by a maximal element, we determine the Lee weight distributions of two classes of the codes over $\Bbb F_2+u\Bbb F_2$. Our results show that the codes have few Lee weights. Via the Gray map, we obtain an infinite family of binary codes meeting the Griesmer bound and a class of binary distance optimal  codes. 


\end{abstract}

\maketitle

\bigskip
\section{Introduction}

An $[n, k, d]$ binary linear code $\mathcal{C}$  is a $k$-dimensional subspace of $\Bbb F_2^n$ with minimum  Hamming  distance $d$.
  Let $A_i$ denote the number of codewords in $\mathcal C$
 with Hamming weight $i$. The weight enumerator of
 $\mathcal C$ is defined by
 $1+A_1z+A_2z^2+\cdots+A_nz^n.$
The sequence $(1, A_1, A_2, \ldots, A_n)$ is called the weight distribution of
 $\mathcal C$. A code $\mathcal{C}$ is $t$-weight if the number of nonzero $A_{i}$ in the sequence $(A_1, A_2, \ldots, A_n)$ is equal to $t$.

Let $\Bbb F_{q}$ be a finite field of order $q$, where $q$ is a power of a prime $p$. Let $D=\{d_{1}, d_{2}, \ldots, d_{n}\}\subseteq \Bbb F_{w}$, where $w$ is a power of $q$. Let $\Tr_{w/q}$ be the trace function from $\Bbb F_{w}$ to $\Bbb F_{q}$. A linear code of length $n$ over $\Bbb F_{q}$ is defined by
\begin{equation*}\mathcal{C}_{D}= \{(\Tr_{w/q}(xd_{1}), \ldots, \Tr_{w/q}(xd_{n})) : x\in  \Bbb F_{w}\}.\end{equation*}
 The code $\mathcal{C}_{D}$ is called a trace code over $\Bbb F_q$  and the set $D$ is called the defining set of the code $\mathcal{C}_{D}$.  Although different orderings of the elements of $D$ result in different codes, these codes are permutation equivalent and have the same parameters. If the set $D$ is properly chosen, the code $\mathcal{C}_{D}$ may have good parameters. This generic construction was first introduced by Ding {\em et al}. \cite{D1, DN}. Many known codes have been produced by selecting a proper defining set, see \cite{HY1, LYL, LM2}.   These codes can be used in secret sharing schemes \cite{ADHK,CDY} and authentication codes \cite{DHKW}.
 
 Let $R$ be a finite commutative ring, $R_m$ be  an extension of $R$ of degree $m$, and $R_m^*$ be the multiplicative group of units of $R_m$. A trace code over $R$ with a defining set $L=\{l_1,l_2,\cdots, l_n\} \subseteq R_m^*$ is defined by
 \begin{equation*}\mathcal {C}_L=\{\Tr(xl_1), \Tr(xl_2), \cdots, \Tr(xl_n)|x\in R_m\},\end{equation*}
where $\Tr(\cdot)$ is a $R$-linear function from $R_m$ to $R$. Based on the  construction above, some  codes with few weights  over rings have been obtained, see \cite{LM}, \cite{LS}-\cite{S5}.

Let  $R=\Bbb F_q+u\Bbb F_q, u^2=0$ and $\mathcal {R}=\Bbb F_w+u\Bbb F_w$. The Lee weight distribution of a trace code $\mathcal{C}_L$  has been investigated in the literature. 
 
 (1) When $R=\Bbb F_2+u\Bbb F_2, u^2=0$, $\mathcal {R}=\Bbb F_{2^m}+u\Bbb F_{2^m}$, and $L=\Bbb F_{2^m}^*+u\Bbb F_{2^m}$, the code $\mathcal{C}_L$ is a two-weight code, see \cite{S1}.
 
 
 (2) When $R=\Bbb F_p+u\Bbb F_p, u^2=0$, $\mathcal {R}=\Bbb F_{p^m}+u\Bbb F_{p^m}$, and $L=\mathcal{Q}+u\Bbb F_{p^m}$, where $p$ is an odd prime and $\mathcal{Q}$ is  the set of all square elements of $\Bbb F_{p^m}^*$,  the code $C_L$ is a two-weight or three-weight code, see \cite{S2}.
 
 (3) When $R=\Bbb F_p+u\Bbb F_p, u^2=u$, $\mathcal {R}=\Bbb F_{p^m}+u\Bbb F_{p^m}$,  $L=\mathcal{Q}+u\Bbb F_{p^m}^*$ and $L=\Bbb F_{p^m}^*+u\Bbb F_{p^m}^*$, where $\mathcal{Q}$ is  the set of all square elements of $\Bbb F_{p^m}^*$,  the code $C_L$ is a two-weight or few-weight code, see \cite{S3}.
 
 (4) When $R=\Bbb F_q+u\Bbb F_q, u^2=0$, $\mathcal {R}=\Bbb F_{q^m}+u\Bbb F_{q^m}$, and $L=C_0^{(e,q^m)}+u\Bbb F_{p^m}$, where $e$ is a divisor of $q-1$ and  $C_0^{(e,q^m)}$ is  the cyclotomic class of order $e$,  the code $C_L$ is a two-weight or few-weight code, see \cite{LM}.



Most recently,  Hyun {\em et al}.  \cite{CH, HLL} constructed some infinite families of  binary optimal and minimal  linear codes  via simplicial complexes. Inspired by their work, in this paper, we focus on few-weight codes over $\Bbb F_2+u\Bbb F_2, u^2=0$ by  employing simplicial complexes. 
Let $\Delta_1, \Delta_2$ be two simplicial complexes of $\Bbb F_2^m$ and they are not equal to $\Bbb F_2^m$ at the same time.  Let $ L_1=\Delta_1+u\Delta_2^c$ and $ L_2=\Delta_1^c+u\Delta_2^c$, where $\Delta_1^c=\Bbb F_2^m\backslash \Delta_1$ and $\Delta_2^c=\Bbb F_2^m \backslash \Delta_2$.  Two codes $\mathcal {C}_{L_1}$ and $\mathcal {C}_{L_2}$ over $\Bbb F_2+u\Bbb F_2$ are defined by \begin{equation}\mathcal {C}_{L_1}=\{c_{\mathbf{a}}=(\langle \mathbf{a}, \mathbf{l}\rangle )_{\mathbf{l}\in L_1}|\mathbf{a}\in \Bbb F_2^m+u \Bbb F_2^m\},\end{equation}
 and  \begin{equation}\mathcal {C}_{L_2}=\{c_{\mathbf{a}}=(\langle \mathbf{a}, \mathbf{l}\rangle )_{\mathbf{l}\in L_2}|\mathbf{a}\in \Bbb F_2^m+u \Bbb F_2^m\},\end{equation}
where $\langle \mathbf{a}, \mathbf{l}\rangle$ is the Euclidean inner product in $ \Bbb F_2^m+u \Bbb F_2^m$.

The rest of this paper is organized as follows. In Section 2, we recall some concepts and results. In Sections 3, we determine the Lee weight distributions of some  codes in (1.1) and (1.2).  In Section 4,  using the Gray map, we obtain two classes of binary  optimal codes and present some  examples. In Section 5, we  conclude the paper.

\section{Preliminaries}

\subsection{Simplicial complexes and generating functions}$~$

Let $\Bbb F_2$ be the finite field with order two. Assume that $m$ is a positive integer. The support $\mathrm{supp}(v)$ of a vector $v \in \Bbb F_2^m$ is defined by the set of nonzero coordinate positions. The Hamming weight $wt(v)$ of $v\in \mathbb{F}^m_2$ is defined by the size of $\mathrm{supp}(v)$. There is a bijection between $\mathbb{F}_2^m$ and $2^{[m]}$ being the power set of $[m]=\{1, \cdots, m\}$, defined by $v\mapsto$ supp$(v)$. {\bf Throughout this paper, we will identify a vector in $\mathbb{F}_2^m$  with its support.}  For two subsets $A, B\subseteq [m]$, the set $\{x: x\in A\mbox{ and } x\notin B\}$ and the number of elements in the set $A$ are denoted by   
$A\backslash B$ and $|A|$, respectively.  

For two vectors $u,v\in \mathbb{F}_2^m$, we say $v\subseteq u$ if $\mathrm{supp}(v)\subseteq \mathrm{supp}(u)$.  We say that a family $\Delta \subseteq \mathbb{F}_2^m $ is a {\bf simplicial complex} if $u\in \Delta$ and $v\subseteq u$ imply $v\in \Delta$. For a simplicial complex $\Delta$, a maximal element of $\Delta $ is one that is not properly contained in any other element of $\Delta$. Let $\mathcal{F}=\{F_1, \cdots, F_l\}$ be the family of maximal elements of $\Delta$. For each $F\subseteq [m]$, the simplicial complex $\Delta_F$ generated by $F$ is defined to be the family of all subsets of $F$.

Let $X$ be a subset of $\mathbb{F}_2^m$. Hyun {\em et al}. \cite{CH} introduced the following $m$-variable generating function associated with the set $X$:
$$\mathcal{H}_{X}(x_1,x_2\ldots, x_m)=\sum_{u\in X}\prod_{i=1}^mx_i^{u_i}\in \mathbb{Z}[x_1,x_2, \ldots, x_m],
$$
where $u=(u_1,u_2,\ldots, u_m)\in \mathbb{F}_2^m$ and $\mathbb{Z}$ is the ring of integers. We observe that

$(1)$ $\mathcal{H}_{\emptyset}(x_1,x_2\ldots, x_m)=0,$

$(2)$ $\mathcal{H}_{X}(x_1,x_2\ldots, x_m)+\mathcal{H}_{X^{c}}(x_1,x_2\ldots, x_m)=\mathcal{H}_{\mathbb{F}_2^m}(x_1,x_2\ldots, x_m)=\prod_{i\in[m]}(1+x_i)$.\\


The following lemma plays an important role in determining the Lee weight distributions of the codes defined in  (1.1) and (1.2).

\begin{lem}\cite[Theorem 1]{CH}  \label{th1}
Let $\Delta$ be a simplicial complex of $\mathbb{F}_2^n$ with the set of maximal elements $\mathcal {F}$. Then 
 \begin{align*}
 \mathcal{H}_{\Delta}(x_1,x_2\ldots, x_n)=\sum_{\emptyset\neq S\subseteq \mathcal{F}}(-1)^{|S|+1}\prod_{i\in \cap S}(1+x_i),
 \end{align*}
In particular, we have that $|\Delta|=\sum_{\emptyset\neq S\subseteq \mathcal{F}}(-1)^{|S|+1}2^{|\cap S|}$.
\end{lem}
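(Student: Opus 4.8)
The plan is to reduce the identity to two elementary facts: a direct evaluation of $\mathcal{H}$ on the complex generated by a single set, and the observation that the operator $X\mapsto\mathcal{H}_X$ transports the inclusion--exclusion principle from indicator functions to generating polynomials.

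First I would record the single-generator formula. For any $F\subseteq[n]$, expanding the product and reading off monomials gives
$$\prod_{i\in F}(1+x_i)=\sum_{A\subseteq F}\prod_{i\in A}x_i=\mathcal{H}_{\Delta_F}(x_1,\ldots,x_n),$$
since $\Delta_F$ is exactly the family of subsets of $F$ and, under the identification of a vector with its support, the monomial attached to $A$ is $\prod_{i\in A}x_i$. Next, because $\Delta$ is a simplicial complex, it is downward closed, so each of its elements lies below some maximal element; hence $\Delta=\bigcup_{F\in\mathcal{F}}\Delta_F$, while each $\Delta_F\subseteq\Delta$. I would also note the intersection identity $\bigcap_{F\in S}\Delta_F=\Delta_{\cap S}$ for every nonempty $S\subseteq\mathcal{F}$, where $\cap S=\bigcap_{F\in S}F$: a set $A$ is a subset of every $F\in S$ precisely when $A\subseteq\cap S$.

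The crucial step is the inclusion--exclusion. Writing $\mathbbm{1}_X$ for the indicator function of $X\subseteq\mathbb{F}_2^n$, we have $\mathcal{H}_X(x_1,\ldots,x_n)=\sum_{u\in\mathbb{F}_2^n}\mathbbm{1}_X(u)\prod_{i=1}^n x_i^{u_i}$, so $X\mapsto\mathcal{H}_X$ is linear in $\mathbbm{1}_X$. Applying the ordinary inclusion--exclusion identity $\mathbbm{1}_{\bigcup_{F\in\mathcal{F}}\Delta_F}=\sum_{\emptyset\neq S\subseteq\mathcal{F}}(-1)^{|S|+1}\mathbbm{1}_{\bigcap_{F\in S}\Delta_F}$ pointwise, multiplying by $\prod_{i}x_i^{u_i}$, and summing over $u$ yields
$$\mathcal{H}_\Delta=\sum_{\emptyset\neq S\subseteq\mathcal{F}}(-1)^{|S|+1}\mathcal{H}_{\bigcap_{F\in S}\Delta_F}=\sum_{\emptyset\neq S\subseteq\mathcal{F}}(-1)^{|S|+1}\mathcal{H}_{\Delta_{\cap S}},$$
and substituting the single-generator formula gives the claimed expression. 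Finally, specializing $x_1=\cdots=x_n=1$ turns $\mathcal{H}_\Delta$ into $|\Delta|$ and each factor $\prod_{i\in\cap S}(1+x_i)$ into $2^{|\cap S|}$, which proves the cardinality statement.

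I expect no serious obstacle; the only points requiring care are verifying the intersection identity $\bigcap_{F\in S}\Delta_F=\Delta_{\cap S}$ and making the passage from the set-theoretic inclusion--exclusion to the polynomial identity rigorous, both of which become routine once the linearity of $X\mapsto\mathcal{H}_X$ in $\mathbbm{1}_X$ is isolated.
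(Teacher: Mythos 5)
Your proof is correct: the single-generator evaluation $\mathcal{H}_{\Delta_F}=\prod_{i\in F}(1+x_i)$, the decomposition $\Delta=\bigcup_{F\in\mathcal{F}}\Delta_F$ with $\bigcap_{F\in S}\Delta_F=\Delta_{\cap S}$, and inclusion--exclusion carried through the linear map $X\mapsto\mathcal{H}_X$ together give exactly the stated identity, with the cardinality claim following by setting $x_1=\cdots=x_n=1$. The paper itself does not prove this lemma but imports it from \cite{CH}, and your argument is the same inclusion--exclusion over maximal faces used there, so there is nothing to add.
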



\subsection{Lee weight and Gray map}$~$
 
 In the remainder of this paper,  we always assume that  $R=\Bbb F_2+u\Bbb F_2$, where $u^2=0$.  A linear code of $\mathcal{C} $ of length $m$ over $R$ is an $R$-submodule of $R^m$. Let $\mathbf{ x}=(x_1,x_2,\cdots, x_m)$ and $\mathbf{y}=(y_1,y_2, \cdots, y_m)$ be two vectors of $R^m$. The inner product of $\mathbf{x}$ and $\mathbf{y}$ is defined by $\langle \mathbf{x},\mathbf{y} \rangle=\sum_{i=1}^mx_iy_i\in R$. 
 
 For any $a+ub\in R$, $a,b\in \Bbb F_2$, the Gray map $\phi$ from $R$ to $\Bbb F_2^2$ is defined by $$\phi: R\to \Bbb F_2^2, a+ub\mapsto (b,a+b).$$
 Any vector $\mathbf{x}\in R^m$ can be written as $\mathbf{x}={a}+u{b}$, where ${a}, {b}\in \Bbb F_2^m$. The map $\phi$ can be extended naturally from $R^m$ to $\Bbb F_2^{2m}$ as follows:
 $$\phi: R^m\to \Bbb F_2^{2m},\mathbf{x}={a}+u{b}\mapsto ({b}, {a+b}).$$
The Hamming weight of a vector ${a}$ of length $m$ over $\Bbb F_2$ is defined to be the number of nonzero entries in the vector ${a}$. The Lee weight of a vector $\mathbf{x}$ of length $m$ over $R$ is defined to be the Hamming weight of its Gray image as follows:
$$w_L(\mathbf{x})=w_L({a}+u{b})=w_H({b})+w_H({a+b}).$$
The Lee distance of ${a,b}\in R^m$ is defined as $w_L(\mathbf{x-y})$. It is easy to check that the Gray map is an isometry from $(R^m, d_L)$ to $(\Bbb F_2^{2m}, d_H)$.
\section{The Lee weight distributions of the codes over $R$}





Let $\Delta_1, \Delta_2$ be two simplicial complexes of $\Bbb F_2^m$ and they are not equal to $\Bbb F_2^m$ at the same time.  Let $ L_1=\Delta_1+u\Delta_2^c$ and $ L_2=\Delta_1^c+u\Delta_2^c$ . We define two codes $\mathcal {C}_{L_1}$ and $\mathcal {C}_{L_2}$ over $\Bbb F_2+u\Bbb F_2$ as follows:
 \begin{equation*}\mathcal {C}_{L_1}=\{c_{\mathbf{a}}=(\langle \mathbf{a}, \mathbf{l}\rangle )_{\mathbf{l}\in L_1}|\mathbf{a}\in \Bbb F_2^m+u \Bbb F_2^m\},\end{equation*}
 and  \begin{equation*}\mathcal {C}_{L_2}=\{c_{\mathbf{a}}=(\langle \mathbf{a}, \mathbf{l}\rangle )_{\mathbf{l}\in L_2}|\mathbf{a}\in \Bbb F_2^m+u \Bbb F_2^m\}.\end{equation*}
 The lengths of the codes $\mathcal{C}_{L_1}$ and $\mathcal{C}_{L_2}$ are $|L_1|$ and $|L_2|$, respectively.  

Employing Lemma 2.1, we will give the Lee weight distributions of the codes  $\mathcal{C}_{L_1}$ and $\mathcal{C}_{L_2}$ in the case that the two  simplicial complexes  are all generated by a single maximal element.  Before giving our main results, we  prove the following lemma first.





For two subsets $X$ and $Y$ of $[m]$, we use $\chi(X|Y)$ to denote a function from $2^{[m]}\times 2^{[m]}$ to $\{0,1\}$, and  $\chi(X|Y)=1$ if and only if $X \cap Y=\emptyset$.

\begin{lem} For $A, B\subseteq [m]$, we have the following.

$(1)$ The size of the set $\{\emptyset\neq X\subseteq [m]| \chi(X| A)=1\}$ is  $2^{m-|A|}-1$; the size of the set $\{X\subseteq [m]| \chi(X| A)=0\}$ is  $2^m-2^{m-|A|}$.

$(2)$ Let $$S_1=\{\emptyset\neq X\subseteq [m]| \chi (X|A)\chi(X|B)=1 \}$$ and $$S_0=\{X\subseteq [m]| \chi (X|A)\chi(X|B)=0 \}.$$ Then $|S_1|=2^{m-|A\cup B|}-1$ and $|S_0|=2^m-2^{m-|A\cup B|}$.

$(3)$ Define $A\oplus B:= (A\cup B)\backslash (A\cap B)$. Let 
$$T_2=\{(X, Y)|\emptyset\neq X,Y\subseteq [m], X\neq Y, \chi(Y|B)=1, \chi(X | A)+\chi((X\oplus Y) | A)=2\}, $$
$$T_1=\{(X, Y)|\emptyset\neq X,Y\subseteq [m], X\neq Y, \chi(Y|B)=1,\chi(X | A)+\chi((X\oplus Y) | A)=1\}, $$
and $$T_0=\{(X, Y)|\emptyset\neq X,Y\subseteq [m], X\neq Y, \chi(Y|B)=1, \chi(X | A)+\chi((X\oplus Y) | A)=0\}.$$
Then $$|T_2|=(2^{m-|A|}-2)(2^{m-|A\cup B|}-1),|T_1|=2(2^{m-|A|}-1)(2^{m-|B|}-2^{m-|A\cup B|}),$$ and $$|T_0|=2^m(2^{m-|B|}-1)+2^{m-|A|}(1+2^{m-|A\cup B|}-2^{m+1-|B|}).$$ 
 
\end{lem}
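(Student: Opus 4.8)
The plan is to reduce everything to the single observation that, for subsets $X,Y,A\subseteq[m]$,
$$\chi((X\oplus Y)\,|\,A)=1 \iff X\cap A=Y\cap A,$$
which follows at once from the distributive law $(X\oplus Y)\cap A=(X\cap A)\oplus(Y\cap A)$ together with the fact that a symmetric difference is empty precisely when the two sets coincide. Since $\chi(X\,|\,A)=1$ simply means $X\subseteq A^c$, parts (1) and (2) are immediate: $\chi(X\,|\,A)=1$ counts the nonempty subsets of $A^c$, of which there are $2^{m-|A|}-1$, and the complementary set (where $\chi(X\,|\,A)=0$) has $2^m-2^{m-|A|}$ elements; replacing $A$ by $A\cup B$, using $\chi(X\,|\,A)\chi(X\,|\,B)=1\iff X\subseteq(A\cup B)^c$, yields the two sizes in (2).

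For part (3) I will first record that $T_2,T_1,T_0$ partition the collection
$$\mathcal P=\{(X,Y): \emptyset\neq X,Y\subseteq[m],\ X\neq Y,\ \chi(Y\,|\,B)=1\},$$
according to the value $\chi(X\,|\,A)+\chi((X\oplus Y)\,|\,A)\in\{0,1,2\}$. Counting $\mathcal P$ directly (a nonempty $Y\subseteq B^c$, a nonempty $X\subseteq[m]$, then discarding the diagonal $X=Y$) gives $|\mathcal P|=(2^m-2)(2^{m-|B|}-1)$. Hence it suffices to compute $|T_2|$ and $|T_1|$ and then extract $|T_0|=|\mathcal P|-|T_1|-|T_2|$, which I prefer to a direct enumeration of $T_0$ whose defining conditions $X\cap A\neq\emptyset$ and $X\cap A\neq Y\cap A$ are awkward to count head-on.

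For $T_2$ the condition forces $X\cap A=\emptyset$ and $X\cap A=Y\cap A$, hence $X\subseteq A^c$ and $Y\subseteq(A\cup B)^c$ (also using $Y\subseteq B^c$); there are $(2^{m-|A|}-1)(2^{m-|A\cup B|}-1)$ such pairs, and since $X=Y$ forces $X=Y\subseteq(A\cup B)^c$ I subtract $2^{m-|A\cup B|}-1$ to reach the stated $|T_2|$. For $T_1$ I split into the two cases $(\chi(X\,|\,A),\chi((X\oplus Y)\,|\,A))=(1,0)$ and $(0,1)$. In the first, $X\subseteq A^c$ while $Y\subseteq B^c$ with $Y\cap A\neq\emptyset$; here $X\neq Y$ is automatic, giving $(2^{m-|A|}-1)(2^{m-|B|}-2^{m-|A\cup B|})$ pairs. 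The second case $X\cap A=Y\cap A\neq\emptyset$ is the only delicate count: I will stratify by the common trace $C:=X\cap A$, which must satisfy $\emptyset\neq C\subseteq A\backslash B$ for a compatible $Y\subseteq B^c$ to exist; for each such $C$ there are $2^{m-|A|}$ choices of $X$ and $2^{m-|A\cup B|}$ choices of $Y$, from which I discard the $2^{m-|A\cup B|}$ diagonal pairs and sum over $C$. The identity $|A\backslash B|=|A\cup B|-|B|$ then collapses $(2^{|A\backslash B|}-1)\,2^{m-|A\cup B|}$ to $2^{m-|B|}-2^{m-|A\cup B|}$, so this case contributes exactly the same number as the first, and adding them gives the stated $|T_1|$.

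The main obstacle is precisely this second stratum of $T_1$: one must keep the constraints on $X$ and $Y$ separate (only $Y$ is confined to $B^c$, whereas $X$ is free off $A$), identify correctly which common traces $C$ admit a partner, and remove the diagonal before summing over $C$. Once $|T_1|$ and $|T_2|$ are in hand, the complementary identity $|T_0|=|\mathcal P|-|T_1|-|T_2|$ is a routine expansion that simplifies to $2^m(2^{m-|B|}-1)+2^{m-|A|}(1+2^{m-|A\cup B|}-2^{m+1-|B|})$, matching the claimed formula.
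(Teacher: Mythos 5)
Your proof is correct and follows essentially the same route as the paper's: counting (1) and (2) directly, computing $|T_2|$ and the two strata of $|T_1|$ (the second by stratifying over the common trace $X\cap A=Y\cap A$ and removing the diagonal), and obtaining $|T_0|$ by subtraction from the total $(2^m-2)(2^{m-|B|}-1)$. Your unified identity $(X\oplus Y)\cap A=(X\cap A)\oplus(Y\cap A)$ is a slightly cleaner packaging of the paper's case analysis of $(X\oplus Y)\cap A$, but the argument is the same.
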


\begin{proof} $(1)$ Each element in $[m] \backslash A$ can be in the set $X$ or not.  Hence the number of nonempty $X$  such that $\chi (X| A)=1$ is $2^{m-|A|}-1$.

$(2) $ Due to $|S_0|+|S_1|=2^m$, it suffices to determine the size of $S_1$.  Note that $\chi (X|A)\chi(X|B)=1$ if and only if  $\chi (X|A)=\chi (X|B)=1$ if and only if $X\cap (A\cup B)=\emptyset$. By (1), $|S_1|=2^{m-|A\cup B|}-1$. 

$(3)$  Notice that $T_2\cup T_1 \cup T_0=\{(X, Y)|\emptyset\neq X,Y\subseteq [m], X\neq Y, \chi(Y|B)=1\}$. Then \begin{equation}|T_2|+|T_1|+|T_0|=(2^{m-|B|}-1)(2^m-2).\end{equation} It suffices to determine the sizes of $T_2$ and $T_1$. 

 Note that $\chi(X | A)+\chi((X\oplus Y) | A)=2$ if and only if $\chi(X | A)=\chi((X\oplus Y) | A)=1$ if and only if $X\cap A=\emptyset$ and $(X\oplus Y) \cap A=\emptyset$. By definition, \begin{equation}(X\oplus Y) \cap A=((X\cup Y)\backslash (X\cap Y))\cap A=(X\cap Y^c\cap A)\cup (Y\cap X^c \cap A),\end{equation} where $X^c=[m]\backslash X$. The fact $X\cap A=\emptyset$ implies that  $X^c\cap A=A$. By (3.2),  
  $X\cap A=\emptyset$ and $(X\oplus Y) \cap A=\emptyset$ if and only if $X\cap A=\emptyset$ and $Y\cap A=\emptyset$. In a word, the element $(X,Y)$ in $T_2$ should satisfy the following conditions:
$$\emptyset\neq X, \emptyset\neq Y, X\neq Y, X\cap A=\emptyset, Y\cap (A\cup B)=\emptyset.$$
According to the fact that  $2^{m-|A|}\ge 2^{m-|A\cup B|}$, we have  \begin{equation}|T_2|=(2^{m-|A|}-2)(2^{m-|A\cup B|}-1).\end{equation}

Note that $\chi(X | A)+\chi((X\oplus Y) | A)=1$ if and only if $\chi(X | A)=1$ and $\chi((X\oplus Y) | A)=0$ or $\chi(X | A)=0$ and $\chi((X\oplus Y) | A)=1$. We divide the set $T_1$ into two parts, denoted by $T_1'$ and $T_1''$. In the following, we proceed with the proof in two cases.

$(i)$ $\chi(X | A)=1$ and $\chi((X\oplus Y) | A)=0$.  By  (3.2), $\chi((X\oplus Y) | A)=0$ if and only if $(X\oplus Y) \cap A\neq \emptyset$ if and only if $Y\cap A\neq \emptyset$. In this case, the element $(X,Y)$ in $T_1'$ should satisfy the following conditions:
$$\emptyset\neq X, X\cap A=\emptyset, Y\cap B=\emptyset, Y\cap A\neq \emptyset .$$
Then  \begin{equation}|T_1'|=(2^{m-|A|}-1) (2^{m-| B|}-2^{m-|A\cup B|}).\end{equation}

$(ii)$ $\chi(X | A)=0$ and $\chi((X\oplus Y) | A)=1$. By (3.2),  $\chi((X\oplus Y) | A)=1$ if and only if $(X\cap A)\cap Y^c=\emptyset$ and $(Y\cap A)\cap X^c=\emptyset$ if and only if $X\cap A \subseteq Y$ and $Y\cap A \subseteq X$. Namely, $X\cap A=Y\cap A$. In this case, the element $(X,Y)$ in $T_1''$ should satisfy the following conditions:
$$\emptyset\neq X, X\neq Y, Y\cap B=\emptyset, X\cap A=Y\cap A\neq\emptyset.$$
The size of $X\cap A$ can be $1, 2, \cdots, |A\backslash B|=k.$ For each $1\le i\le k$, if $|X\cap A|=i$, then there are $2^{m-|A\cup B|}C_{k}^i$ choices for $Y$ and $2^{m-|A|}-1$ choices for $X$ such that the conditions above are satisfied. Then  \begin{eqnarray} |T_1''|&=&(2^{m-|A|}-1)2^{m-|A\cup B|}\sum_{i=1}^{k} C_k^i=(2^{m-|A|}-1)2^{m-|A\cup B|}(2^{|A\backslash B|}-1)\nonumber\\
&=&(2^{m-|A|}-1)(2^{m-|B|}-2^{m-|A\cup B|}).\end{eqnarray}

By  (3.4) and (3.5),  \begin{eqnarray} |T_1|=|T_1'|+|T_1''|=2(2^{m-|A|}-1)(2^{m-|B|}-2^{m-|A\cup B|}).\end{eqnarray}

By  (3.1), (3.3), and  (3.6), \begin{eqnarray*}|T_0|&=&(2^{m-|B|}-1)(2^m-2)-(2^{m-|A|}-2)(2^{m-|A\cup B|}-1)\\
&-&2(2^{m-|A|}-1)(2^{m-|B|}-2^{m-|A\cup B|})\\
&=&2^m(2^{m-|B|}-1)+2^{m-|A|}(1+2^{m-|A\cup B|}-2^{m+1-|B|}).\end{eqnarray*}

This completes the proof.
\end{proof}

\subsection{Lee weight distribution of the code $\mathcal{C}_{L_1}$}$~$

Suppose that $\mathbf{a}=\mathbf{\alpha}+u\mathbf{\beta}$, $\mathbf{l}={t_1}+u{t_2}$, where ${\alpha=(\alpha_1, \cdots, \alpha_m),\beta=(\beta_1, \cdots, \beta_m)}\in \Bbb F_2^m$, ${t_1}\in \Delta_1$, and ${t_2}\in \Delta_2^c$. If $\mathbf{a}=\mathbf{0}$, then $w_L(c_{\mathbf{a}})=0$. Next we  assume that $\mathbf{a}\neq \mathbf{0}$. Then the Lee weight of the codeword $c_{\mathbf{a}}$ of the code $\mathcal{C}_{L_1}$ becomes that \begin{eqnarray}&&w_L(c_{\mathbf{a}})\nonumber\\
&=&w_L((\mathbf{\alpha}+u\mathbf{\beta})({t_1}+u{t_2})_{{t_1}\in \Delta_1, {t_2}\in \Delta_2^c})\nonumber\\
&=&w_L((\mathbf{\alpha}{t_1}+u(\alpha{t_2}+\beta{t_1}))_{{t_1}\in \Delta_1, {t_2}\in \Delta_2^c})\nonumber\\
&=&w_H((\alpha{t_2}+\beta{t_1})_{\mathbf{t_1}\in \Delta_1, {t_2}\in \Delta_2^c})+w_H(((\alpha+\beta){t_1}+\alpha{t_2})_{{t_1}\in \Delta_1, {t_2}\in \Delta_2^c})\nonumber\\
&=&|L_1|-\frac12\sum_{y\in\Bbb F_2}\sum_{t_1\in \Delta_1}\sum_{t_2\in \Delta_2^c}(-1)^{(\alpha{t_2}+\beta{t_1})y}\nonumber\\
&+&|L_1|-\frac12\sum_{y\in\Bbb F_2}\sum_{t_1\in \Delta_1}\sum_{t_2\in \Delta_2^c}(-1)^{((\alpha+\beta){t_1}+\alpha{t_2})y}\nonumber\\
&=&|L_1|-\frac12\sum_{t_1\in \Delta_1}(-1)^{\beta{t_1}}\sum_{t_2\in \Delta_2^c}(-1)^{\alpha{t_2}}
-\frac12\sum_{t_1\in \Delta_1}(-1)^{(\alpha+\beta)t_1}\sum_{t_2\in \Delta_2^c}(-1)^{\alpha{t_2}}\nonumber\\
&=&|L_1|-\frac12(\sum_{t_2\in \Delta_2^c}(-1)^{\alpha{t_2}})(\sum_{t_1\in \Delta_1}(-1)^{\beta{t_1}}+\sum_{t_1\in \Delta_1}(-1)^{(\alpha+\beta)t_1}).
\end{eqnarray}

\begin{thm} Let $A, B\subseteq [m]$ and $0<|B|<m$ .  Let $\Delta_A, \Delta_B$ be two simplicial complexes of $\Bbb F_2^m$ and $ L_1=\Delta_A+u\Delta_B^c$. Then the code  $\mathcal{C}_{L_1}$ has length $2^{|A|}(2^m-2^{|B|})$, size $2^{m+|A|}$, and its Lee weight distribution of  is given in Table 1.\end{thm}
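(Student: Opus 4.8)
The plan is to start from the character-sum expression (3.8) for $w_L(c_{\mathbf a})$, evaluate each of its factors through the generating-function identity of Lemma \ref{th1}, read off the possible Lee weights, and finally count their multiplicities using Lemma 3.2 after correcting for the degenerate cases excluded there.

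First the length and size. Since $\Delta_A$ (resp. $\Delta_B$) is the family of all subsets of $A$ (resp. $B$), we have $|\Delta_A|=2^{|A|}$ and $|\Delta_B^c|=2^m-2^{|B|}$, so $|L_1|=|\Delta_A|\,|\Delta_B^c|=2^{|A|}(2^m-2^{|B|})$. For the size I would compute the kernel of the $R$-linear surjection $\mathbf a\mapsto c_{\mathbf a}$. Writing $\mathbf a=\alpha+u\beta$ and expanding $\langle\mathbf a,t_1+ut_2\rangle=\alpha\!\cdot\!t_1+u(\alpha\!\cdot\!t_2+\beta\!\cdot\!t_1)$, the codeword vanishes iff the $\Bbb F_2$-part $\alpha\!\cdot\!t_1$ and the $u$-part $\alpha\!\cdot\!t_2+\beta\!\cdot\!t_1$ are $0$ for all $t_1\in\Delta_A$, $t_2\in\Delta_B^c$. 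Taking $t_1=\emptyset$ in the $u$-part gives $\alpha\!\cdot\!t_2=0$ for all $t_2\in\Delta_B^c$; testing a singleton $\{j\}$ with $j\notin B$ (possible since $0<|B|<m$) and then a pair $\{i,j\}$ with $i\in B$ shows every $\alpha_i=0$, so $\alpha=0$. The $u$-part then reduces to $\beta\!\cdot\!t_1=0$ for all $t_1\in\Delta_A$, i.e. $\mathrm{supp}(\beta)\cap A=\emptyset$, and the $\Bbb F_2$-part is automatic. Hence the kernel is $\{u\beta:\mathrm{supp}(\beta)\cap A=\emptyset\}$, of size $2^{m-|A|}$, so the code has size $2^{2m}/2^{m-|A|}=2^{m+|A|}$.

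Next the weights. Applying Lemma \ref{th1} to the one-generator complex, $\sum_{t_1\in\Delta_A}(-1)^{\beta\cdot t_1}=\prod_{i\in A}(1+(-1)^{\beta_i})=2^{|A|}\chi(\mathrm{supp}(\beta)\,|\,A)$, and similarly for the $(\alpha+\beta)$-sum; using $\mathcal H_X+\mathcal H_{X^c}=\prod_{i\in[m]}(1+x_i)$ together with Lemma \ref{th1}, the remaining factor is $P:=\sum_{t_2\in\Delta_B^c}(-1)^{\alpha\cdot t_2}=2^m\mathbf 1[\alpha=0]-2^{|B|}\chi(\mathrm{supp}(\alpha)\,|\,B)$. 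Substituting into (3.8) gives $w_L(c_{\mathbf a})=|L_1|-2^{|A|-1}\,P\,c$, where $c:=\chi(\mathrm{supp}(\beta)\,|\,A)+\chi(\mathrm{supp}(\alpha+\beta)\,|\,A)\in\{0,1,2\}$ and $P$ equals $2^m-2^{|B|}$, $-2^{|B|}$, or $0$ according as $\alpha=0$, $\alpha\neq0$ with $\mathrm{supp}(\alpha)\cap B=\emptyset$, or $\mathrm{supp}(\alpha)\cap B\neq\emptyset$. A short case check shows the only nonzero weights are $w_0=2^{m+|A|}-2^{|A|+|B|}$ (the value $|L_1|$, whenever $Pc=0$), $w_1=2^{m+|A|}-2^{|A|+|B|-1}$ (from $P=-2^{|B|}$, $c=1$), and $w_2=2^{m+|A|}$ (from $P=-2^{|B|}$, $c=2$), so $\mathcal C_{L_1}$ is at most three-weight.

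Finally the multiplicities. Matching $\mathrm{supp}(\beta)$ with the variable $X$ and $\mathrm{supp}(\alpha)$ with $Y$ of Lemma 3.2, the conditions ``$\mathrm{supp}(\alpha)\cap B=\emptyset$ and $c=j$'' are exactly those defining $T_j$, except that $T_j$ also imposes $\alpha\neq0$, $\beta\neq0$, $\alpha\neq\beta$. I would therefore count the pairs $(\alpha,\beta)\in(\Bbb F_2^m)^2$ of each target weight as $|T_j|$ plus the boundary contributions from $\beta=0$ and from $\alpha=\beta$ (disjoint once $\alpha\neq0$), each a routine $\chi$-count as in Lemma 3.2(1)--(2), and then divide by the kernel size $2^{m-|A|}$. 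The cancellations $|T_2|+2(2^{m-|A\cup B|}-1)=2^{m-|A|}(2^{m-|A\cup B|}-1)$ and $|T_1|+2(2^{m-|B|}-2^{m-|A\cup B|})=2^{m-|A|+1}(2^{m-|B|}-2^{m-|A\cup B|})$ yield frequencies $2^{m-|A\cup B|}-1$ for $w_2$ and $2^{m-|B|+1}-2^{m-|A\cup B|+1}$ for $w_1$, with the frequency of $w_0$ obtained by subtracting these and the zero codeword from $2^{m+|A|}$; a final verification that the four frequencies sum to $2^{m+|A|}$ confirms Table 1. The main obstacle is precisely this degenerate-case bookkeeping: Lemma 3.2 is stated only for nonempty, distinct $X,Y$, so the terms with $\alpha=0$, $\beta=0$, or $\alpha=\beta$ must be reinstated by hand and reconciled with the division by $|\ker|$, which is where an off-by-one or sign slip is most likely to enter.
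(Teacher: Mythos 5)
Your proposal follows essentially the same route as the paper: evaluate the character sums in the Lee-weight formula (note it is (3.7), not (3.8), that pertains to $\mathcal{C}_{L_1}$) via Lemma \ref{th1} to obtain $w_L(c_{\mathbf a})=|L_1|-2^{|A|-1}\bigl(2^m\delta_{0,\alpha}-2^{|B|}\chi(\alpha\,|\,B)\bigr)\bigl(\chi(\beta\,|\,A)+\chi(\alpha+\beta\,|\,A)\bigr)$, then count pairs $(\alpha,\beta)$ using the counting lemma (Lemma 3.1, not 3.2) after reinstating the degenerate cases $\beta=0$ and $\alpha=\beta$ --- exactly the corrections the paper silently absorbs into its sets $P_0,P_1,P_2$, and your explicit kernel computation justifying the size $2^{m+|A|}$ is a step the paper omits. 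The one point to fix is the final normalization: Table 1 tabulates frequencies over all $2^{2m}$ choices of $\mathbf a$ (its entries sum to $2^{2m}$, and weight $0$ is listed with frequency $2^{m-|A|}$, the kernel size), so your undivided counts $2^{m-|A|}(2^{m-|A\cup B|}-1)$ and $2^{m-|A|+1}(2^{m-|B|}-2^{m-|A\cup B|})$ are the ones that literally match the table, whereas dividing by $2^{m-|A|}$ yields the distribution over distinct codewords --- self-consistent with the stated size, but not what the table prints.
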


\begin{table}[h]  
\caption{Lee weight distribution of the code in Theorem 3.2}   
\begin{tabu} to 0.8\textwidth{X[1.2,c]|X[2,c]}  
\hline 
\rm{Lee Weight}&\rm{Frequency}\\ 
\hline
$0$&$2^{m-|A|}$\\ 
\hline
$2^{m+|A|}$& $2^{m-|A|}(2^{m-|A\cup B|}-1)$ \\ 
\hline
$2^{|A|}(2^m-2^{|B|})$& $2^{2m}+2^{m-|A|}(2^{m-|A\cup B|}-2^{m+1-| B|})$ \\ 
\hline
$2^{m+|A|}-2^{|A|+|B|-1}$& $2^{m-|A|+1}(2^{m-|B|}-2^{m-|A\cup B|})$ \\ 
\hline
\end{tabu}  
\end{table}

\begin{proof}

It is easy to check that the length $|L_1|$ of the code $\mathcal{C}_{L_1}$ is $ 2^{|A|}(2^m-2^{|B|})$. Recall that there is a bijection between $\mathbb{F}_2^m$ and $2^{[m]}$.  For $u\in \mathbb{F}_2^m$ and  $X\subseteq\mathbb{F}_2^m$, we also use $\chi(u|X)$ to denote a Boolean function in $n$-variable, and $\chi(u|X)=1$ if and only if $u\bigcap X=\emptyset$. Suppose that $\mathbf{0}\neq\mathbf{a}=\mathbf{\alpha}+u\mathbf{\beta}$, where $\alpha=(\alpha_1, \cdots, \alpha_m)$, $\beta=(\beta_1, \cdots, \beta_m)\in \Bbb F_2^m$.
By Lemma 2.1,  \begin{eqnarray*}&&\mathcal{H}_{\Delta_A}((-1)^{\beta_1}, \cdots, (-1)^{\beta_m})= \prod _{i\in A}(1+(-1)^{\beta_i})
=\prod _{i\in A}2(1-\beta_i)=2^{|A|}\chi(\beta | A).\end{eqnarray*} 
By (3.7), 
\begin{eqnarray*}&&w_L(c_{\mathbf{a}})\nonumber\\
&=&|L_1|-\frac12( 2^m \delta_{0,\alpha}-\mathcal{H}_{\Delta_B}((-1)^{\alpha_1}, \cdots, (-1)^{\alpha_m})\mathcal{H}_{\Delta_A}((-1)^{\beta_1}, \cdots, (-1)^{\beta_m})\nonumber\\
&- &\frac12( 2^m \delta_{0,\alpha}-\mathcal{H}_{\Delta_B}((-1)^{\alpha_1}, \cdots, (-1)^{\alpha_m})\mathcal{H}_{\Delta_A}((-1)^{\alpha_1+\beta_1}, \cdots, (-1)^{\alpha_m+\beta_m})\nonumber\\
&=&  |L_1|-2^{|A|-1}(2^m\delta_{0,\alpha}-2^{|B|}\chi(\alpha | B))(\chi(\beta | A)+\chi(\alpha+\beta | A) ).
\end{eqnarray*} where $\delta$ is the Kronecker delta function.

Suppose that $\mathrm{supp}(\alpha)=X$ and $\mathrm{supp}(\beta)=Y$.  It is easy to verify that $\mathrm{supp}(\alpha+\beta)=X\oplus Y$, which is defined in Lemma 3.1.
Next we divide the proof into two cases.

$(1)$  $\alpha=0$. Then 
$$w_L(c_{\mathbf{a}})=|L_1|-2^{|A|}(2^m-2^{|B|})(\chi(\beta | A) .$$ In this case, $w_L(c_{\mathbf{a}})=0$ or $2^{|A|}(2^m-2^{|B|})$ and the frequencies are given in Lemma 3.1 (1).


$(2)$ $\alpha\neq 0$.  Then $$w_L(c_{\mathbf{a}})=|L_1|+2^{|A|+|B|-1}\chi(\alpha |B)(\chi(\beta |A)+\chi(\alpha+\beta |A)).$$ By Lemma 3.1, the number of $\alpha$ such that $\chi(\alpha | B)=0$ is  $2^m-2^{m-|B|}$,  in this case we have  $w_L(c_{\mathbf{a}})=2^{|A|}(2^m-2^{|B|})$.   Let $$P_2=\{(\alpha, \beta)|\alpha, \beta \in \Bbb F_2^{m}, \chi(\alpha | B)=1, \chi(\beta | A)+\chi(\alpha+\beta | A)=2\}, $$
$$P_1=\{(\alpha, \beta)|\alpha, \beta \in \Bbb F_2^{m},  \chi(\alpha | B)=1, \chi(\beta | A)+\chi(\alpha+\beta | A)=1\}, $$
and $$P_0=\{(\alpha, \beta)|\alpha, \beta \in \Bbb F_2^{m}, \chi(\alpha | B)=1, \chi(\beta | A)+\chi(\alpha+\beta | A)=0\}.$$Then the number of $\mathbf{a}$ with $w_L(c_{\mathbf{a}})=2^{m+|A|}$ is $|P_2|$;  the number of $\mathbf{a}$ with $w_L(c_{\mathbf{a}})=2^{m+|A|}-2^{|A|+|B|-1}$ is $|P_1|$; the number of $\mathbf{a}$ with $w_L(c_{\mathbf{a}})=2^{|A|}(2^m-2^{|B|})$ is $|P_0|$. 

By the proof of Lemma 3.1,  $$|P_2|=2^{m-|A|}(2^{m-|A\cup B|}-1),|P_1|=2^{m-|A|+1}(2^{m-|B|}-2^{m-|A\cup B|}),$$ and $$|P_0|=2^m(2^{m-|B|}-1)+2^{m-|A|}(1+2^{m-|A\cup B|}-2^{m+1-|B|}).$$





This completes the proof.
\end{proof}

\begin{rem} In Theorem 3.2, if $A\cup B=[m]$ or $A\subseteq B$, then the code $\mathcal{C}_{L_1}$ is a two-Lee-weight code.

\end{rem}

\begin{cor}   Suppose that $\Delta_B$ is a simplicial complex with a single maximal element $ B \subseteq [m]$ with $0<|B|<m$.   If $ L_1=u\Delta_B^c$, then the code $\mathcal{C}_{L_1}$ has length $2^m-2^{|B|}$, size $2^{m}$, and its Lee weight distribution of  is given in Table 2.

\end{cor}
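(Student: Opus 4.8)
The plan is to treat Corollary 3.5 as the specialization of Theorem 3.2 to $\Delta_A=\{\mathbf 0\}$ (that is, $A=\emptyset$), while exploiting a simplification peculiar to this case that makes a short direct computation cleaner than substituting $|A|=0$ into Table 1. Setting $|A|=0$ in Theorem 3.2 already produces the asserted length $2^{|A|}(2^m-2^{|B|})=2^m-2^{|B|}$. The first thing I would record is that, since $L_1=u\Delta_B^c$, every coordinate of the codeword is
\begin{equation*}
\langle\alpha+u\beta,\ u t_2\rangle=u\langle\alpha,t_2\rangle,
\end{equation*}
so $c_{\mathbf a}$ depends on $\alpha$ alone. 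To pin down the size $2^m$ it then suffices to check that distinct $\alpha$ yield distinct codewords, i.e.\ that $\Delta_B^c$ spans $\Bbb F_2^m$: for $i\notin B$ the singleton $\{i\}$ lies in $\Delta_B^c$, and for $i\in B$ one has $e_i=(e_i+e_j)+e_j$ with $j\notin B$, both summands lying in $\Delta_B^c$; hence the only $\alpha$ orthogonal to all of $\Delta_B^c$ is $\alpha=\mathbf 0$.

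Next I would compute the Lee weight directly. Under the Gray map a coordinate $us$ with $s\in\Bbb F_2$ has $w_L(us)=2s$, so
\begin{equation*}
w_L(c_{\mathbf a})=2\bigl|\{t_2\in\Delta_B^c:\langle\alpha,t_2\rangle=1\}\bigr|=|\Delta_B^c|-\mathcal H_{\Delta_B^c}\bigl((-1)^{\alpha_1},\dots,(-1)^{\alpha_m}\bigr).
\end{equation*}
Invoking observation $(2)$ of Section 2 together with $\mathcal H_{\Bbb F_2^m}((-1)^{\alpha_i})=2^m\delta_{0,\alpha}$ and $\mathcal H_{\Delta_B}((-1)^{\alpha_i})=2^{|B|}\chi(\alpha\,|\,B)$—the latter being Lemma 2.1 for the single maximal element $B$, exactly as in the proof of Theorem 3.2—this reduces to
\begin{equation*}
w_L(c_{\mathbf a})=(2^m-2^{|B|})-2^m\delta_{0,\alpha}+2^{|B|}\chi(\alpha\,|\,B).
\end{equation*}

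Reading off the three cases then finishes the argument. For $\alpha=\mathbf 0$ the weight is $0$, occurring once; for $\alpha\neq\mathbf 0$ with $\chi(\alpha\,|\,B)=1$ the weight is $2^m$, and Lemma 3.1(1) counts $2^{m-|B|}-1$ such $\alpha$; for $\chi(\alpha\,|\,B)=0$ (which forces $\alpha\neq\mathbf 0$) the weight is $2^m-2^{|B|}$, occurring $2^m-2^{m-|B|}$ times by Lemma 3.1(1). These frequencies sum to $2^m$, consistent with the size. I would also note the agreement with Table 1 at $|A|=0$: its fourth row has frequency $2^{m-|A|+1}(2^{m-|B|}-2^{m-|A\cup B|})=0$ because $|A\cup B|=|B|$, which is precisely why only three Lee weights survive. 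The whole computation is routine; the single step needing genuine care is the size claim, namely the injectivity of $\alpha\mapsto c_{\mathbf a}$, for which the spanning property of $\Delta_B^c$ is the essential input, while the weight analysis is an immediate specialization of the character-sum machinery already developed for Theorem 3.2.
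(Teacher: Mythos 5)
Your computation is correct, and since the paper offers no separate argument for this corollary---it is stated as the immediate specialization of Theorem 3.2 to $A=\emptyset$, under which the fourth row of Table 1 gets frequency $2^{m+1}(2^{m-|B|}-2^{m-|A\cup B|})=0$, exactly as you observe---your direct character-sum derivation is a legitimate and rather more self-contained route. Your verification that $\Delta_B^c$ spans $\Bbb F_2^m$ (hence that $\alpha\mapsto c_{\mathbf a}$ is injective, giving size $2^m$) is a genuine addition: the paper never justifies the size claim in Theorem 3.2 or here. There is, however, one concrete discrepancy you must repair. The frequencies you report ($1$, $2^{m-|B|}-1$, $2^m-2^{m-|B|}$, summing to $2^m$) are not the entries of Table 2, which are $2^m$, $2^m(2^{m-|B|}-1)$, $2^{2m}-2^{2m-|B|}$ and sum to $2^{2m}$. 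Throughout the paper a ``frequency'' counts the number of $\mathbf a=\alpha+u\beta$ ranging over all $2^{2m}$ elements of $\Bbb F_2^m+u\Bbb F_2^m$ that yield a codeword of the given weight; this is why Table 1 lists the zero weight with frequency $2^{m-|A|}$ rather than $1$. Since, as you note, $c_{\mathbf a}$ here depends only on $\alpha$, each of your counts must be multiplied by the $2^m$ free choices of $\beta$, after which they match Table 2 and the $|A|=0$ specialization of Table 1 exactly. As written, your consistency check that the frequencies sum to $2^m$ actually signals that you are tabulating distinct codewords rather than the quantity the table records.
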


\begin{table}[h]  
\caption{Lee weight distribution of the code in Corollary 3.4}   
\begin{tabu} to 0.5\textwidth{X[1,c]|X[2,c]}  
\hline 
\rm{Lee Weight}&\rm{Frequency}\\ 
\hline
$0$&$2^{m}$\\ 
\hline
$2^{m}$&$2^m(2^{m-|B|}-1)$\\
\hline
$2^{m}-2^{|B|}$& $2^{2m}-2^{2m-|B|}$ \\ 
\hline
\end{tabu}  
\end{table}

\begin{cor}   Suppose that $\Delta_B$ is a simplicial complex with a single maximal element $ B \subseteq [m]$ with $0<|B|<m$.   If $ L_1=\Bbb F_2^m+u\Delta_B^c$, then the code $\mathcal{C}_{L_1}$ has length $2^m(2^m-2^{|B|})$, size $2^{2m}$, and its Lee weight distribution   is given in Table 3.

\end{cor}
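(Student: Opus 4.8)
The plan is to deduce this corollary directly from Theorem 3.2 by specializing the first simplicial complex to the whole space. First I would observe that $\mathbb{F}_2^m$ is itself the simplicial complex $\Delta_{[m]}$ generated by the single maximal element $[m]$, so that $L_1 = \mathbb{F}_2^m + u\Delta_B^c = \Delta_{[m]} + u\Delta_B^c$ is exactly the defining set treated in Theorem 3.2 with $A = [m]$. Because $0 < |B| < m$ forces $\Delta_B \neq \mathbb{F}_2^m$, the standing hypothesis that $\Delta_A$ and $\Delta_B$ are not simultaneously equal to $\mathbb{F}_2^m$ is met, so Theorem 3.2 applies. Setting $|A| = m$ there immediately gives length $2^{|A|}(2^m - 2^{|B|}) = 2^m(2^m - 2^{|B|})$ and size $2^{m+|A|} = 2^{2m}$, as asserted.

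Next I would substitute $|A| = m$ into the four rows of Table 1, using that $B \subseteq [m] = A$ forces $A \cup B = [m]$ and hence $|A \cup B| = m$. The decisive simplification is that the weight $2^{m+|A|} = 2^{2m}$ now occurs with frequency $2^{m-|A|}(2^{m-|A\cup B|} - 1) = 2^0(2^0 - 1) = 0$, so this weight class vanishes; this is consistent with Remark 3.3, which already predicts a two-Lee-weight code whenever $A \cup B = [m]$. The three surviving rows read: weight $0$ with frequency $2^{m-|A|} = 1$; weight $2^m(2^m - 2^{|B|})$ with frequency $2^{2m} + 2^{m-|A|}(2^{m-|A\cup B|} - 2^{m+1-|B|}) = 2^{2m} + 1 - 2^{m+1-|B|}$; and weight $2^{m+|A|} - 2^{|A|+|B|-1} = 2^{2m} - 2^{m+|B|-1}$ with frequency $2^{m-|A|+1}(2^{m-|B|} - 2^{m-|A\cup B|}) = 2(2^{m-|B|} - 1)$. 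Recording these three pairs produces Table 3.

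Finally, as a consistency check I would confirm that the frequencies exhaust the $2^{2m}$ codewords, namely $1 + (2^{2m} + 1 - 2^{m+1-|B|}) + (2^{m-|B|+1} - 2) = 2^{2m}$, which holds since $2^{m+1-|B|} = 2^{m-|B|+1}$. There is no genuine difficulty beyond careful bookkeeping of exponents; the only point that merits attention is verifying that the middle weight of Theorem 3.2 really does collapse to frequency zero when $A = [m]$, since it is precisely this collapse that reduces the four-class distribution of Table 1 to the three-class distribution of Table 3 and exhibits $\mathcal{C}_{L_1}$ as a two-Lee-weight code.
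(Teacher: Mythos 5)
Your proposal is correct and is exactly the route the paper intends: Corollary 3.5 is stated without a separate proof precisely because it is the specialization of Theorem 3.2 (and Table 1) to $A=[m]$, where $A\cup B=[m]$ makes the weight-$2^{2m}$ class have frequency $0$ and the remaining rows collapse to Table 3. Your substitutions and the frequency check summing to $2^{2m}$ all agree with the paper's Table 3.
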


\begin{table}[h]  
\caption{Lee weight distribution of the code in Corollary 3.5}   
\begin{tabu} to 0.6\textwidth{X[1,c]|X[2,c]}  
\hline 
\rm{Lee Weight}&\rm{Frequency}\\ 
\hline
$0$&$1$\\ 
\hline
$2^{m}(2^m-2^{|B|})$&$2^{2m}-2^{m+1-|B|}+1$\\
\hline
$2^{2m}-2^{m-1+|B|}$& $2^{m+1-|B|}-2$ \\ 
\hline
\end{tabu}  
\end{table}

\subsection{Lee weight distributions of the code $\mathcal{C}_{L_2}$}$~$

Suppose that $\mathbf{a}=\mathbf{\alpha}+u\mathbf{\beta}$, $\mathbf{l}={t_1}+u{t_2}$, where ${\alpha=(\alpha_1, \cdots, \alpha_m),\beta=(\beta_1, \cdots, \beta_m)}\in \Bbb F_2^m$, ${t_1}\in \Delta_1^c$, and ${t_2}\in \Delta_2^c$. If $\mathbf{a}=\mathbf{0}$, then $w_L(c_{\mathbf{a}})=0$. Next we  assume that $\mathbf{a}\neq \mathbf{0}$.  By  (3.7), then the Lee weight of the codeword $c_{\mathbf{a}}$ of the code $\mathcal{C}_{L_2}$ becomes that 
\begin{eqnarray}&&w_L(c_{\mathbf{a}})\nonumber\\
&=&w_H((\alpha{t_2}+\beta{t_1})_{\mathbf{t_1}\in \Delta_1^c, {t_2}\in \Delta_2^c})+w_H(((\alpha+\beta){t_1}+\alpha{t_2})_{{t_1}\in \Delta_1^c, {t_2}\in \Delta_2^c})\nonumber\\
&=&|L_2|-\frac12\sum_{y\in\Bbb F_2}\sum_{t_1\in \Delta_1^c}\sum_{t_2\in \Delta_1^c}(-1)^{(\alpha{t_2}+\beta{t_1})y}\nonumber\\
&+&|L_2|-\frac12\sum_{y\in\Bbb F_2}\sum_{t_1\in \Delta_1^c}\sum_{t_2\in \Delta_2^c}(-1)^{((\alpha+\beta){t_1}+\alpha{t_2})y}\nonumber\\
&=&|L_2|-\frac12\sum_{t_1\in \Delta_1^c}(-1)^{\beta{t_1}}\sum_{t_2\in \Delta_2^c}(-1)^{\alpha{t_2}}
-\frac12\sum_{t_1\in \Delta_1^c}(-1)^{(\alpha+\beta)t_1}\sum_{t_2\in \Delta_2^c}(-1)^{\alpha{t_2}}\nonumber\\
&=&|L_2|-\frac12(\sum_{t_2\in \Delta_2^c}(-1)^{\alpha{t_2}})(\sum_{t_1\in \Delta_1^c}(-1)^{\beta{t_1}}-\sum_{t_1\in \Delta_1^c}(-1)^{(\alpha+\beta)t_1}).
\end{eqnarray}



\begin{thm} Let $A, B\subseteq [m]$, $0<|A|<m$, and $0<|B|<m$ .  Let $\Delta_A, \Delta_B$ be two simplicial complexes of $\Bbb F_2^m$ and $ L_2=\Delta_A^c+u\Delta_B^c$. Then the code  $\mathcal{C}_{L_2}$ has length $(2^m-2^{|A|})(2^m-2^{|B|})$, size $2^{2m}$, and its Lee weight distribution   is given in Table 4.

\end{thm}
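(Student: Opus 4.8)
The plan is to follow the template of Theorem 3.2: read the Lee weight off the factored expression in (3.8), which (paralleling (3.7)) is
\[
w_L(c_{\mathbf a})=|L_2|-\tfrac12\Bigl(\sum_{t_2\in\Delta_B^c}(-1)^{\alpha t_2}\Bigr)\Bigl(\sum_{t_1\in\Delta_A^c}(-1)^{\beta t_1}+\sum_{t_1\in\Delta_A^c}(-1)^{(\alpha+\beta)t_1}\Bigr),
\]
and then count frequencies with Lemma 3.1. First, since $\Delta_A,\Delta_B$ are generated by the single maximal elements $A,B$, Lemma 2.1 gives $|\Delta_A|=2^{|A|}$ and $|\Delta_B|=2^{|B|}$, so $|L_2|=|\Delta_A^c|\,|\Delta_B^c|=(2^m-2^{|A|})(2^m-2^{|B|})$. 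Then I would evaluate the three character sums. From $\mathcal H_{\Delta_A}((-1)^{\gamma_1},\dots,(-1)^{\gamma_m})=\prod_{i\in A}(1+(-1)^{\gamma_i})=2^{|A|}\chi(\gamma|A)$ and the identity $\mathcal H_{\Delta_A^c}=\mathcal H_{\Bbb F_2^m}-\mathcal H_{\Delta_A}$ recorded before Lemma 2.1 (together with $\prod_{i\in[m]}(1+(-1)^{\gamma_i})=2^m\delta_{0,\gamma}$), I obtain
\[
\sum_{t_1\in\Delta_A^c}(-1)^{\gamma t_1}=2^m\delta_{0,\gamma}-2^{|A|}\chi(\gamma|A),\qquad \sum_{t_2\in\Delta_B^c}(-1)^{\alpha t_2}=2^m\delta_{0,\alpha}-2^{|B|}\chi(\alpha|B),
\]
so that, upon substitution (for $\gamma=\beta$ and $\gamma=\alpha+\beta$), $w_L(c_{\mathbf a})$ becomes an explicit expression in $\chi(\cdot|A),\chi(\cdot|B)$ and the deltas $\delta_{0,\alpha},\delta_{0,\beta},\delta_{0,\alpha+\beta}$.

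Next I would run a case analysis on $\mathbf a=\alpha+u\beta$. If $\alpha=0$, the $\Delta_B^c$-factor equals $2^m-2^{|B|}$ and $w_L(c_{\mathbf a})$ reduces to $0$ (only at $\beta=0$), to $2^m(2^m-2^{|B|})$ when $\chi(\beta|A)=1$, or to $|L_2|$ when $\chi(\beta|A)=0$, the latter two frequencies coming from Lemma 3.1(1). If $\alpha\neq0$ and $\chi(\alpha|B)=0$, the $\Delta_B^c$-factor vanishes and $w_L(c_{\mathbf a})=|L_2|$ for every $\beta$. The remaining case $\alpha\neq0,\ \chi(\alpha|B)=1$ is the heart of the argument: here
\[
w_L(c_{\mathbf a})=|L_2|+2^{|B|-1}\bigl(S_\beta+S_{\alpha+\beta}\bigr),\qquad S_\gamma=2^m\delta_{0,\gamma}-2^{|A|}\chi(\gamma|A),
\]
and I would separate the boundary pairs $\beta\in\{0,\alpha\}$, where exactly one delta fires and $S_\beta+S_{\alpha+\beta}$ picks up an extra $2^m$, from the generic pairs $\beta\neq0,\alpha$, where the deltas vanish and only $\chi(\beta|A)+\chi(\alpha+\beta|A)\in\{0,1,2\}$ survives. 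Using the relation $(\alpha+\beta)\cap A=(\alpha\cap A)\oplus(\beta\cap A)$ as in (3.2), the value $2$ forces $\chi(\alpha|A)=1$ and the value $1$ forces $\chi(\alpha|A)=0$.

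For the frequencies I would apply Lemma 3.1 under the correspondence $X=\beta$, $Y=\alpha$ (so $X\oplus Y=\alpha+\beta$ and $\chi(Y|B)=1$): the generic subcases $\chi(\beta|A)+\chi(\alpha+\beta|A)=2,1,0$ with $\beta\neq0,\alpha$ are precisely the sets $T_2,T_1,T_0$, so their counts are immediate from Lemma 3.1(3). The boundary pairs are handled by sorting the admissible $\alpha$ (those with $\alpha\neq0$, $\chi(\alpha|B)=1$) by the value of $\chi(\alpha|A)$; by Lemma 3.1(1)--(2) there are $2^{m-|A\cup B|}-1$ of them with $\chi(\alpha|A)=1$ and $2^{m-|B|}-2^{m-|A\cup B|}$ with $\chi(\alpha|A)=0$, each contributing the two values $\beta\in\{0,\alpha\}$. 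Collecting everything produces the nonzero Lee weights
\[
|L_2|,\qquad 2^m(2^m-2^{|B|}),\qquad |L_2|-2^{|A|+|B|},\qquad |L_2|-2^{|A|+|B|-1},
\]
\[
|L_2|+2^{m+|B|-1}-2^{|A|+|B|},\qquad |L_2|+2^{m+|B|-1}-2^{|A|+|B|-1},
\]
with their frequencies; checking that these frequencies together with the single $0$ add up to $2^{2m}$ both fills Table 4 and, because weight $0$ is attained only at $\mathbf a=0$, shows the code has size $2^{2m}$.

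The main obstacle is exactly the Kronecker-delta (that is, $2^m$) terms, which are absent from the $\mathcal C_{L_1}$ computation because there $t_1$ runs over $\Delta_A$ rather than $\Delta_A^c$. These terms peel the boundary pairs $\beta\in\{0,\alpha\}$ off into their own weight classes, so $\mathcal C_{L_2}$ genuinely supports more Lee weights than $\mathcal C_{L_1}$, and the count must keep the generic contributions (matched to $T_2,T_1,T_0$) strictly apart from the boundary contributions. A secondary point I would verify is consistency in degenerate configurations: when $A\subseteq B$ the set $T_1$ is empty, and when $A\cup B=[m]$ the class $\chi(\alpha|A)=1$ is empty, so certain of the above weights formally coincide with one another or even with $0$; but in each such configuration the corresponding frequency vanishes, so no genuine collision occurs and, in particular, no nonzero codeword has Lee weight $0$.
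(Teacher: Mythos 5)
Your proposal is correct and takes essentially the same route as the paper's proof: the same expansion of (3.8) via Lemma 2.1 into terms $2^m\delta_{0,\gamma}-2^{|A|}\chi(\gamma|A)$, the same case split ($\alpha=0$; $\chi(\alpha|B)=0$; boundary pairs $\beta\in\{0,\alpha\}$; generic pairs matched to $T_2,T_1,T_0$), and the same appeal to Lemma 3.1 for all the counts. One remark: your frequency for the weight $2^m(2^m-2^{|B|})$, namely $2^{m-|A|}-1$ from Lemma 3.1(1), agrees with the paper's own case (1) but not with the entry $2^{m-|B|}-1$ printed in Table 4; a direct check (e.g.\ $m=3$, $A=\{1\}$, $B=\{2,3\}$, where the three vectors $\beta\in\{010,001,011\}$ with $\alpha=0$ all give Lee weight $32$) shows the table entry, not your count, is the one in error.
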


\begin{table}[h]  
\caption{Lee weight distribution of the code in Theorem 3.6}   
\begin{tabu} to 1\textwidth{X[2,c]|X[2,c]}  
\hline 
\rm{Lee Weight}&\rm{Frequency}\\ 
\hline
$0$&$1$\\ 
\hline
$2^{m}(2^m-2^{|B|})$& $2^{m-|B|}-1$ \\ 
\hline
$(2^m-2^{|A|})(2^m-2^{|B|})+2^{|B|-1}(2^m-2^{|A|+1})$& $2(2^{m-|A\cup B|}-1)$ \\ 
\hline
$(2^m-2^{|A|})(2^m-2^{|B|})+2^{|B|-1}(2^m-2^{|A|})$& $2(2^{m-|B|}-2^{m-|A\cup B|})$ \\ 
\hline
$(2^m-2^{|A|})(2^m-2^{|B|})-2^{|A|+|B|}$& $(2^{m-|A|}-2)(2^{m-|A\cup B|}-1)$ \\ 
\hline
$(2^m-2^{|A|})(2^m-2^{|B|})-2^{|A|+|B|-1}$& $2(2^{m-|A|}-1)(2^{m-|B|}-2^{m-|A\cup B|})$ \\ 
\hline
$(2^m-2^{|A|})(2^m-2^{|B|})$&$2^{2m}+2^{m-|A|}(1+2^{m-|A\cup B|}-2^{m+1-|B|})-2^{m-|B|}$\\
\hline
\end{tabu}  
\end{table}


\begin{proof}  The length $|L_2|$ of the code $\mathcal{C}_{L_2}$ is $ (2^m-2^{|A|})(2^m-2^{|B|})$. Suppose that $\mathbf{0}\neq\mathbf{a}=\mathbf{\alpha}+u\mathbf{\beta}$, where $\alpha=(\alpha_1, \cdots, \alpha_m)$, $\beta=(\beta_1, \cdots, \beta_m)\in \Bbb F_2^m$.
By  (3.8), 
\begin{eqnarray*}&&w_L(c_{\mathbf{a}})\nonumber\\
&=&|L_2|-\frac12( 2^m \delta_{0,\alpha}-\mathcal{H}_{\Delta_B}((-1)^{\alpha_1}, \cdots, (-1)^{\alpha_m})( 2^m \delta_{0,\beta}-\mathcal{H}_{\Delta_A}((-1)^{\beta_1}, \cdots, (-1)^{\beta_m})\nonumber\\
&- &\frac12( 2^m \delta_{0,\alpha}-\mathcal{H}_{\Delta_B}((-1)^{\alpha_1}, \cdots, (-1)^{\alpha_m})( 2^m \delta_{0,\alpha+\beta}-\mathcal{H}_{\Delta_A}((-1)^{\alpha_1+\beta_1}, \cdots, (-1)^{\alpha_m+\beta_m})\nonumber\\
&=&  |L_2|-\frac12(2^m\delta_{0,\alpha}-2^{|B|}\chi(\alpha | B))(2^m\delta_{0,\beta}-2^{|A|}\chi(\beta | A))\\
&-&\frac12(2^m\delta_{0,\alpha}-2^{|B|}\chi(\alpha | B))(2^m\delta_{0,\alpha+\beta}-2^{|A|}\chi(\alpha+\beta | A)).
\end{eqnarray*} 

Suppose that $\mathrm{supp}(\alpha)=X$ and $\mathrm{supp}(\beta)=Y$.  It is easy to verify that $\mathrm{supp}(\alpha+\beta)=X\oplus Y$, which is defined in Lemma 3.1.
Next we divide the proof into four cases.

$(1)$  $\alpha=0$ and $\beta\neq 0$. Then 
$$w_L(c_{\mathbf{a}})=|L_2|+(2^m-2^{|B|})2^{|A|}\chi(\beta | A).$$ In this case, $w_L(c_{\mathbf{a}})=(2^m-2^{|A|})(2^m-2^{|B|})$ or $2^m(2^m-2^{|B|})$ and the frequencies are given in Lemma 3.1 (1).


$(2)$ $\alpha\neq 0$ and $\beta= 0$.  Then $$w_L(c_{\mathbf{a}})=|L_2|+2^{|B|-1}\chi(\alpha | B)(2^m-2^{|A|}-2^{|A|}\chi(\alpha | A)).$$ By Lemma 3.1, the number of $\alpha$ such that $\chi(\alpha | B)=0$ is  $2^m-2^{m-|B|}$,  in this case we have  $w_L(c_{\mathbf{a}})=(2^m-2^{|A|})(2^m-2^{|B|})$.  By Lemma 3.1, the number of $\alpha$ such that  $\chi(\alpha | B)\chi(\alpha | A)=1$ is $2^{m-|A\cup B|}-1$, and in this case we have $w_L(c_{\mathbf{a}})=(2^m-2^{|A|})(2^m-2^{|B|})+2^{|B|-1}(2^m-2^{|A|+1}) $.  The number of $\alpha$ such that $\chi(\alpha | B)=1$ and $\chi(\alpha | A)=0$ is $2^{m-|B|}-1-(2^{m-|A\cup B|}-1)=2^{m-|B|}-2^{m-|A\cup B|}$, and in this case we have $w_L(c_{\mathbf{a}})=(2^m-2^{|A|})(2^m-2^{|B|})+2^{|B|-1}(2^m-2^{|A|}) =(2^m-2^{|A|})(2^m-2^{|B|-1})$.

$(3)$ $\alpha=\beta\neq 0$.  Then $$w_L(c_{\mathbf{a}})=|L_2|+2^{|B|-1}\chi(\alpha | B)(2^m-2^{|A|}-2^{|A|}\chi(\alpha | A)).$$ Similar  to (2), we have the Lee weights and their  frequencies.


$(4)$  $\alpha\neq 0$, $\beta\neq 0$, and $\alpha\neq \beta$.
 Then $$w_L(c_{\mathbf{a}})=|L_2|-2^{|A|+|B|-1}\chi(\alpha | B)(\chi(\beta | A)+\chi(\alpha+\beta |A)). $$ Hence the number of $\alpha$ such that $\chi(\alpha | B)=0$ is  $2^m-2^{m-|B|}$,  in this case we have  $w_L(c_{\mathbf{a}})=(2^m-2^{|A|})(2^m-2^{|B|})$.  Let $$T_2=\{(\alpha, \beta)|\alpha, \beta \in \Bbb F_2^{m*}, \alpha\neq \beta, \chi(\alpha | B)=1, \chi(\beta | A)+\chi(\alpha+\beta | A)=2\}, $$
$$T_1=\{(\alpha, \beta)|\alpha, \beta \in \Bbb F_2^{m*}, \alpha\neq \beta, \chi(\alpha | B)=1, \chi(\beta | A)+\chi(\alpha+\beta | A)=1\}, $$
and $$T_0=\{(\alpha, \beta)|\alpha, \beta \in \Bbb F_2^{m*}, \alpha\neq \beta, \chi(\alpha | B)=1, \chi(\beta | A)+\chi(\alpha+\beta | A)=0\}.$$Then the number of $\mathbf{a}$ with $w_L(c_{\mathbf{a}})=(2^m-2^{|A|})(2^m-2^{|B|})-2^{|A|+|B|}=2^m(2^m-2^{|A|}-2^{|B|})$ is $|T_2|$;  the number of $\mathbf{a}$ with $w_L(c_{\mathbf{a}})=(2^m-2^{|A|})(2^m-2^{|B|})-2^{|A|+|B|-1}$ is $|T_1|$; the number of $\mathbf{a}$ with $w_L(c_{\mathbf{a}})=(2^m-2^{|A|})(2^m-2^{|B|})$ is $|T_0|$. 

By Lemma 3.1,  $$|T_2|=(2^{m-|A|}-2)(2^{m-|A\cup B|}-1),|T_1|=2(2^{m-|A|}-1)(2^{m-|B|}-2^{m-|A\cup B|}),$$ and $$|T_0|=2^m(2^{m-|B|}-1)+2^{m-|A|}(1+2^{m-|A\cup B|}-2^{m+1-|B|}).$$





This completes the proof.
\end{proof}

\begin{rem} In Theorem 3.6, if $A\cup B=[m]$ or $A\subseteq B$, then the code $\mathcal{C}_{L_2}$ is a four-Lee-weight code.

\end{rem}

\begin{thm}   Suppose that $\Delta_A $ is a simplicial complex with a single maximal element $ A \subseteq [m]$ with $0<|A|<m$.   If $ L_2=\Delta_A^c$, then the code  $\mathcal{C}_{L_2}$ has length $2^m-2^{|A|}$, size $2^{2m}$, and its Lee weight distribution   is given in Table 5.
\end{thm}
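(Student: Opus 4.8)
The plan is to exploit that $L_2=\Delta_A^c$ lies entirely in $\Bbb F_2^m$: every $\mathbf{l}\in L_2$ has zero $u$-part, so the whole construction splits into two independent copies of a single binary code. Concretely, writing $\mathbf{a}=\alpha+u\beta$ and $\mathbf{l}=t$ with $t\in\Delta_A^c$, we have $\langle\mathbf{a},\mathbf{l}\rangle=\langle\alpha,t\rangle+u\langle\beta,t\rangle$, whose Gray image is $(\langle\beta,t\rangle,\langle\alpha+\beta,t\rangle)$. Hence $\phi(c_{\mathbf{a}})$ is the concatenation of the two binary words $(\langle\beta,t\rangle)_{t\in\Delta_A^c}$ and $(\langle\alpha+\beta,t\rangle)_{t\in\Delta_A^c}$, which are codewords of the binary code $C=\{(\langle\gamma,t\rangle)_{t\in\Delta_A^c}:\gamma\in\Bbb F_2^m\}$. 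Writing $c_\gamma$ for the codeword of $C$ indexed by $\gamma$, this gives the basic identity $w_L(c_{\mathbf{a}})=w_H(c_\beta)+w_H(c_{\alpha+\beta})$ (equivalently, it is the specialization of the computation leading to (3.8) in which the $u$-sum $\sum_{t_2\in\Delta_2^c}(-1)^{\langle\alpha,t_2\rangle}$ collapses to the single term $1$).

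Next I would read off the Hamming weight distribution of $C$ from Lemma 2.1 and observation (2). Since $\Delta_A$ is generated by the single element $A$, we have $\mathcal{H}_{\Delta_A}((-1)^{\gamma_1},\dots,(-1)^{\gamma_m})=2^{|A|}\chi(\gamma\,|\,A)$, and observation (2) yields $\sum_{t\in\Delta_A^c}(-1)^{\langle\gamma,t\rangle}=2^m\delta_{0,\gamma}-2^{|A|}\chi(\gamma\,|\,A)$. Therefore
\[
w_H(c_\gamma)=\tfrac12\Big((2^m-2^{|A|})-\big(2^m\delta_{0,\gamma}-2^{|A|}\chi(\gamma\,|\,A)\big)\Big),
\]
so $C$ has exactly three weights: $0$ for the single vector $\gamma=\mathbf{0}$; $2^{m-1}$ when $\chi(\gamma\,|\,A)=1$, which by Lemma 3.1(1) happens for $2^{m-|A|}-1$ nonzero vectors; and $2^{m-1}-2^{|A|-1}$ when $\chi(\gamma\,|\,A)=0$, for the remaining $2^m-2^{m-|A|}$ vectors.

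Because $(\alpha,\beta)\mapsto(\beta,\alpha+\beta)$ is a bijection of $\Bbb F_2^m\times\Bbb F_2^m$, the pair $(\beta,\alpha+\beta)$ runs over all ordered pairs of vectors independently, so the Lee weight distribution of $\mathcal{C}_{L_2}$ is the self-convolution of the distribution of $C$: a Lee weight $w_H(c_{\gamma_1})+w_H(c_{\gamma_2})$ occurs with frequency equal to the product of the two corresponding type-frequencies, doubled for the two orderings of a mixed pair. The six combinations of the three types then produce the rows of Table 5, and the total frequency is $\big(1+(2^{m-|A|}-1)+(2^m-2^{m-|A|})\big)^2=2^{2m}$. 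The size claim $2^{2m}$ follows from injectivity of $\mathbf{a}\mapsto c_{\mathbf{a}}$: if $c_{\mathbf{a}}=0$ then $\langle\alpha,t\rangle=\langle\beta,t\rangle=0$ for every $t\in\Delta_A^c$, and $\Delta_A^c$ spans $\Bbb F_2^m$ when $0<|A|<m$ (for $j\notin A$ one has $e_j\in\Delta_A^c$, and for $i\in A$ one has $e_i+e_j\in\Delta_A^c$), so $\alpha=\beta=\mathbf{0}$.

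The only genuine bookkeeping, and the step I expect to be the main obstacle, is to confirm that the six candidate Lee weights
\[
0,\quad 2^{m-1},\quad 2^{m-1}-2^{|A|-1},\quad 2^m,\quad 2^m-2^{|A|},\quad 2^m-2^{|A|-1}
\]
are distinct and that each frequency matches Table 5. Ordering them shows the inequalities hold whenever $|A|<m-1$; in the boundary case $|A|=m-1$ the weights $2^{m-1}$ and $2^m-2^{|A|}$ coincide and the two corresponding frequencies must be amalgamated, a degeneracy that should be reconciled with the statement of Table 5.
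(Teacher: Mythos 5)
Your proof is correct, and it takes a genuinely different route from the paper's. The paper specializes its character-sum formula (3.8) to $L_2=\Delta_A^c$ (the factor $\sum_{t_2\in\Delta_2^c}(-1)^{\langle\alpha,t_2\rangle}$ collapsing to $1$, as you note), then splits into the four cases $\alpha=0$, $\beta=0$, $\alpha=\beta$, and the generic case, counting the last one by invoking Lemma 3.1(3) with $B=\emptyset$. You instead observe the structural fact that, since every element of $L_2$ has zero $u$-part, the Gray image of $c_{\mathbf a}$ is the juxtaposition $(c_\beta,c_{\alpha+\beta})$ of two codewords of the binary code $C=\{(\langle\gamma,t\rangle)_{t\in\Delta_A^c}:\gamma\in\Bbb F_2^m\}$, and that $(\alpha,\beta)\mapsto(\beta,\alpha+\beta)$ is a bijection of $\Bbb F_2^m\times\Bbb F_2^m$, so that $\phi(\mathcal{C}_{L_2})$ is exactly the direct product $C\times C$ and its weight distribution is the self-convolution of the three-weight distribution of $C$ (weights $0$, $2^{m-1}$, $2^{m-1}-2^{|A|-1}$ with frequencies $1$, $a=2^{m-|A|}-1$, $b=2^m-2^{m-|A|}$). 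This buys several things: it dispenses with Lemma 3.1(3) entirely, it explains at a glance why the six frequencies in Table 5 are $1,2a,2b,a^2,b^2,2ab$ and why they sum to $(1+a+b)^2=2^{2m}$, and your spanning argument for $\Delta_A^c$ gives a clean proof of the size claim (the paper only gets injectivity implicitly from $A_0=1$). Your formula $w_L(c_{\mathbf a})=w_H(c_\beta)+w_H(c_{\alpha+\beta})$ also silently corrects two typographical slips in the paper's displayed expression (which writes $\chi(\beta\,|\,A)$ where $\chi(\alpha+\beta\,|\,A)$ is meant, and later $2^{|F|}$ for $2^{|A|}$).

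The degeneracy you flag at the end is real and is the only one: for $0<|A|<m$ the six values $0$, $2^{m-1}$, $2^{m-1}-2^{|A|-1}$, $2^m$, $2^m-2^{|A|}$, $2^m-2^{|A|-1}$ are pairwise distinct except that $2^{m-1}=2^m-2^{|A|}$ when $|A|=m-1$. The paper does reconcile this, but only in Corollary 3.9 (Table 6), where the rows of frequencies $2a=2$ and $b^2=(2^m-2)^2$ are amalgamated into the single row $2^{2m}-2^{m+2}+6$; Table 5 itself is to be read generically, i.e.\ literally only for $|A|\le m-2$. So your reservation does not indicate a gap in your argument, only an imprecision in the theorem's statement that the paper itself patches one corollary later.
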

\begin{table}[h]  
\caption{Lee weight distribution of the code in Theorem 3.8}   
\begin{tabu} to 0.6\textwidth{X[1,c]|X[2,c]}  
\hline 
\rm{Lee Weight}&\rm{Frequency}\\ 
\hline
$0$&$1$\\ 
\hline
$2^{m}$&$(2^{m-|A|}-1)^2$\\
\hline
$2^{m}-2^{|A|}$& $(2^m-2^{m-|A|})^2$ \\ 
\hline
$2^{m-1}$& $2(2^{m-|A|}-1)$ \\ 
\hline
$2^{m-1}-2^{|A|-1}$& $2(2^m-2^{m-|A|})$ \\ 
\hline
$2^{m}-2^{|A|-1}$& $2(2^m-2^{m-|A|})(2^{m-|A|}-1)$ \\ 
\hline
\end{tabu}  
\end{table}

\begin{proof} The length of the code $\mathcal{C}_{L_2}$ is $|L_2|=2^m-2^{|A|}$.  
By (3.8), \begin{eqnarray*}w_L(c_{\mathbf{a}})&=&|L_2|-\frac12( 2^m \delta_{0,\beta}-\mathcal{H}_{\Delta_A}((-1)^{\beta_1}, \cdots, (-1)^{\beta_m})\nonumber\\
&- &\frac12( 2^m \delta_{0,\alpha+\beta}-\mathcal{H}_{\Delta_A}((-1)^{\alpha_1+\beta_1}, \cdots, (-1)^{\alpha_m+\beta_m})\nonumber\\
&=&|L_2|-\frac12( 2^m \delta_{0,\beta}-2^{|A|}\chi(\beta | A))- \frac12( 2^m \delta_{0,\alpha+\beta}-2^{|A|}\chi(\beta | A)).
\end{eqnarray*}
Next we proceed the proof by four cases.

$(1)$  $\alpha=0$ and $\beta\neq 0$. Then \begin{eqnarray*}w_L(c_{\mathbf{a}})
&=&2^m-2^{|F|} +2^{|A|}\chi(\beta | A).\end{eqnarray*} By Lemma 3.1, the number of $\beta$ such that $\chi(\beta |A)=1$ is  $2^{m-|A|}-1$,  in this case we have  $w_L(c_{\mathbf{a}})=2^m$. On the other hand, the number of $\beta$ such that $\chi(\beta |A)=1$ is  $2^m-2^{m-|A|}$,  in this case we have  $w_L(c_{\mathbf{a}})=2^m-2^{|A|}$.

$(2)$  $\beta=0$ and $\alpha\neq 0$.  Then \begin{eqnarray*}w_L(c_{\mathbf{a}})
&=&2^{m-1} -2^{|A|-1} + 2^{|A|-1}\chi(\alpha | A).\end{eqnarray*} By Lemma 3.1,  the number of $\alpha$ such that $\chi(\alpha |A)=1$ is  $2^{m-|A|}-1$,  in this case we have  $w_L(c_{\mathbf{a}})=2^{m-1}$. On the other hand, the number of $\alpha$ such that $\chi(\alpha |A)=1$ is  $2^m-2^{m-|A|}$,  in this case we have  $w_L(c_{\mathbf{a}})=2^{m-1}-2^{|A|-1}$.

$(3)$  $\alpha=\beta\neq 0$. Then \begin{eqnarray*}w_L(c_{\mathbf{a}})
&=&2^{m-1} -2^{|A|-1} + 2^{|A|-1}\chi(\alpha | A).\end{eqnarray*} Similar to (2), we have the Lee weights and their frequencies.

$(4)$ $\alpha\neq 0$, $\beta\neq 0$, and $\alpha\neq\beta$.  Then \begin{eqnarray*}w_L(c_{\mathbf{a}})
&=&2^m-2^{|A|}+ 2^{|A|-1}(\chi(\alpha | A)+\chi(\alpha+\beta | A)).\end{eqnarray*}
Let $$T_2=\{(\alpha, \beta)|\alpha, \beta \in \Bbb F_2^{m*}, \alpha\neq \beta, \chi(\alpha | A)+\chi(\alpha+\beta | A)=2\}, $$
$$T_1=\{(\alpha, \beta)|\alpha, \beta \in \Bbb F_2^{m*}, \alpha\neq \beta, \chi(\alpha | A)+\chi(\alpha+\beta | A)=1\}, $$
and $$T_0=\{(\alpha, \beta)|\alpha, \beta \in \Bbb F_2^{m*}, \alpha\neq \beta, \chi(\alpha | A)+\chi(\alpha+\beta | A)=0\}.$$ Then the number of $\mathbf{a}$ with $w_L(c_{\mathbf{a}})=2^m$ is $|T_2|$;  the number of $\mathbf{a}$ with $w_L(c_{\mathbf{a}})=2^m-2^{|A|-1}$ is $|T_1|$; the number of $\mathbf{a}$ with $w_L(c_{\mathbf{a}})=2^m-2^{|A|}$ is $|T_0|$.

 Suppose that $\mathrm{supp}(\alpha)=X$ and $\mathrm{supp}(\beta)=Y$. Taking $B=\emptyset$ in Lemma 3.1 (3), 
$$|T_2|=(2^{m-|A|}-1)(2^{m-|A|}-2), |T_1|=2(2^{m-|A|}-1)(2^m-2^{m-|A|}),$$
and $$|T_0|=(2^m-2^{m-|A|})(2^m-2^{m-|A|}-1).$$

This complete the proof.
\end{proof}

\begin{cor} If $|A|=m-1$, then the code $\mathcal{C}_{L_2}$ in Theorem 3.8 is a four-Lee-weight code and its Lee weight distribution is given in Table 6.

\end{cor}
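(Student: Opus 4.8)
The plan is to obtain Corollary 3.9 as a direct specialization of Theorem 3.8, so that no fresh weight computation is required: everything follows by substituting $|A|=m-1$ into Table 5 and tracking which of its six weight values collapse together.

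First I would record the relevant powers under the hypothesis $|A|=m-1$, namely $2^{m-|A|}=2$, $2^{|A|}=2^{m-1}$, and $2^{|A|-1}=2^{m-2}$. Substituting these into the rows of Table 5 yields the weight values $0$, $2^m$, $2^m-2^{|A|}=2^{m-1}$, $2^{m-1}$, $2^{m-1}-2^{|A|-1}=2^{m-2}$, and $2^m-2^{|A|-1}=3\cdot 2^{m-2}$, with respective frequencies $1$, $(2^{m-|A|}-1)^2=1$, $(2^m-2^{m-|A|})^2=(2^m-2)^2$, $2(2^{m-|A|}-1)=2$, $2(2^m-2^{m-|A|})=2(2^m-2)$, and $2(2^m-2^{m-|A|})(2^{m-|A|}-1)=2(2^m-2)$. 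The key observation is that the third and fourth rows now carry the same Lee weight $2^{m-1}$, so they must be merged, their combined frequency being $(2^m-2)^2+2$. Writing every weight as a multiple of $2^{m-2}$ exhibits the four surviving nonzero weights $2^m=4\cdot 2^{m-2}$, $3\cdot 2^{m-2}$, $2^{m-1}=2\cdot 2^{m-2}$, and $2^{m-2}$ as pairwise distinct, so $\mathcal{C}_{L_2}$ is genuinely a four-Lee-weight code with distribution exactly as recorded in Table 6.

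The only step that could be called an obstacle is the bookkeeping of this collision; since Theorem 3.8 already supplies the full distribution, correctness reduces to a single counting identity. I would therefore close the argument by confirming that the frequencies sum to the code size $2^{2m}$: writing $x=2^m-2$, the total is $1+1+\bigl[(2^m-2)^2+2\bigr]+2(2^m-2)+2(2^m-2)=x^2+4x+4=(x+2)^2=2^{2m}$, as required.
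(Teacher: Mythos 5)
Your proof is correct and follows exactly the route the paper intends: the corollary is an immediate specialization of Theorem 3.8, obtained by setting $|A|=m-1$ in Table 5 and merging the two rows that collapse to the common weight $2^{m-1}$, whose combined frequency $(2^m-2)^2+2=2^{2m}-2^{m+2}+6$ matches Table 6. Your closing check that the frequencies sum to $2^{2m}$ is a sensible extra verification.
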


\begin{table}[h]  
\caption{Lee weight distribution of the code in Corollary 3.9}   
\begin{tabu} to 0.5\textwidth{X[1,c]|X[2,c]}  
\hline 
\rm{Lee Weight}&\rm{Frequency}\\ 
\hline
$0$&$1$\\ 
\hline
$2^{m}$&$1$\\
\hline
$2^{m-1}$& $2^{2m}-2^{m+2}+6$ \\ 
\hline
$2^{m-2}$& $2^{m+1}-4$ \\ 
\hline
$2^m-2^{m-2}$& $2^{m+1}-4$ \\ 
\hline
\end{tabu}  
\end{table}


\section{Optimal binary codes and examples}

Recall that the Gray map $\phi$  in Section 2 is an isometry from $(R^m, d_L)$ to $(\Bbb F_2^{2m}, d_H)$. Then we should consider the binary codes $\phi(\mathcal{C}_{L_1})$ and $\phi(\mathcal{C}_{L_2})$ presented in Section 3. In this section, we will present some binary optimal codes and  numeral examples.

An $ [n,k,d]$ code  $\mathcal{C}$ is called distance optimal if no $[n,k,d+1]$ code exists, and is called almost optimal if the code $[n, k, d + 1]$ is optimal, see \cite[Chapter 2]{HP}.  For an $[n, k, d]$ binary  code, the following well-known bound is called the Griesmer bound, see \cite{G},  \begin{eqnarray*} \sum_{i=0}^{k-1}\bigg\lceil {\frac{d}{2^i}} \bigg\rceil \le n,  \end{eqnarray*} where $\lceil {x} \rceil$ denotes the smallest integer greater than or equal to $x$. 

\begin{thm} The code $\phi(\mathcal{C}_{L_1})$ in Theorem 3.2 is distance optimal.
\end{thm}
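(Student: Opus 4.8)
The plan is to read off the parameters of the binary code $\phi(\mathcal{C}_{L_1})$ from Table~1 and then to rule out the existence of any $[n,k,d+1]$ binary code by a direct contradiction with the Griesmer bound. Since the Gray map is additive and hence $\mathbb{F}_2$-linear, and is an isometry from $(R^m,d_L)$ to $(\mathbb{F}_2^{2m},d_H)$, the image $\phi(\mathcal{C}_{L_1})$ is a binary linear code whose length is twice the $R$-length, whose size equals that of $\mathcal{C}_{L_1}$, and whose minimum Hamming distance is the least nonzero Lee weight appearing in Table~1. Thus I would set $n=2^{|A|+1}(2^m-2^{|B|})$, $k=m+|A|$ (the base-$2$ logarithm of the size $2^{m+|A|}$), and $d=2^{|A|}(2^m-2^{|B|})$, this being the smallest of the three nonzero weights $2^{m+|A|}$, $2^{m+|A|}-2^{|A|+|B|-1}$ and $2^{m+|A|}-2^{|A|+|B|}$, and occurring with positive frequency (its frequency is dominated by the term $2^{2m}$). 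I would record at the outset the identity $n=2d$.

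The key reduction is that, by the definition of distance optimality, it suffices to show no $[n,k,d+1]$ code exists; by the Griesmer bound this follows once I prove $\sum_{i=0}^{k-1}\lceil (d+1)/2^i\rceil>n$. I would evaluate the relevant Griesmer sums by splitting the index range at $i=|A|+|B|$. Writing $d=2^{|A|+|B|}(2^{m-|B|}-1)$, for $0\le i\le |A|+|B|$ the number $d/2^i$ is a positive integer, whereas for $|A|+|B|<i\le m+|A|-1$ one has $d/2^i=2^{m+|A|-i}-2^{|A|+|B|-i}$ with $2^{|A|+|B|-i}\in(0,1)$, so $\lceil d/2^i\rceil=2^{m+|A|-i}$. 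Summing the two resulting geometric series---the block $0\le i\le |A|+|B|$ contributing $\sum_{i=0}^{|A|+|B|}\bigl(2^{m+|A|-i}-2^{|A|+|B|-i}\bigr)$ and the block above contributing $\sum_{i=|A|+|B|+1}^{m+|A|-1}2^{m+|A|-i}$---gives $\sum_{i=0}^{k-1}\lceil d/2^i\rceil=n-1$.

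Passing from $d$ to $d+1$ then only affects the indices $0\le i\le |A|+|B|$: since $d/2^i$ was an integer there, adding the quantity $2^{-i}\in(0,1]$ raises each of those ceilings by exactly one, while for $i>|A|+|B|$ the ceiling is unchanged because the correction stays strictly in $(0,1)$. Hence $\sum_{i=0}^{k-1}\lceil (d+1)/2^i\rceil=(n-1)+(|A|+|B|+1)=n+|A|+|B|$, and the hypothesis $0<|B|$ forces $|A|+|B|\ge 1$, so this sum is at least $n+1>n$. A putative $[n,k,d+1]$ code would therefore violate the Griesmer bound, and this contradiction establishes that $\phi(\mathcal{C}_{L_1})$ is distance optimal.

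The main obstacle is the bookkeeping in the two ceiling evaluations: one must pin down the threshold $i=|A|+|B|$ at which $d/2^i$ (respectively $(d+1)/2^i$) passes from an integer to a proper fraction, and verify that just above the threshold the relevant correction---$2^{|A|+|B|-i}$ for $d$, and $2^{-i}(2^{|A|+|B|}-1)$ for $d+1$---genuinely lies in $(0,1)$, so that each ceiling collapses to the clean power $2^{m+|A|-i}$. Once the two geometric blocks are summed correctly the remaining inequality is immediate, and it is precisely here that the standing assumption $0<|B|$, through $|A|+|B|\ge 1$, is used.
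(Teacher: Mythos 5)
Your proposal is correct and follows essentially the same route as the paper: read off the parameters $n=2^{|A|+1}(2^m-2^{|B|})$, $k=m+|A|$, $d=2^{|A|}(2^m-2^{|B|})$ from Table~1, then show the Griesmer sum for $d+1$ equals $n+|A|+|B|>n$, so no $[n,k,d+1]$ code exists. The only cosmetic difference is that you first evaluate the sum for $d$ (getting $n-1$) and then track how the ceilings increment, whereas the paper computes the sum for $d+1$ directly; the threshold analysis at $i=|A|+|B|$ and the final use of $|B|>0$ are the same.
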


\begin{proof} By Theorem 3.2, the code $\phi(\mathcal{C}_{L_1})$ has the following parameters:
$$n=2^{|A|+1}(2^m-2^{|B|}), k=m+|A|, d=2^{|A|}(2^m-2^{|B|}).$$ Then \begin{eqnarray*}&&\sum_{i=0}^{m+|A|-1}\bigg\lceil {\frac{2^{m+|A|}-2^{|A|+|B|}+1}{2^i}} \bigg\rceil\nonumber\\
&=&(2^{m+1+|A|}-2)-(2^{|A|+|B|+1}-1)+1+|A|+|B|\nonumber\\
&=&2^{|A|+1}(2^m-2^{|B|})+|A|+|B|\\
&>&2^{|A|+1}(2^m-2^{|B|}).
\end{eqnarray*}
By the Griesmer bound, there is no code with parameters $[n,k,d+1]$.

This completes the proof.
\end{proof}

We give the following example to illustrate Theorem 3.2.

\begin{exa} Suppose that $m=2$ in Theorem 3.2. 


$(1)$ If $A=B$ and $|A|=|B|=1$, then  $\phi(\mathcal{C}_{L_1})$ is a two-weight binary code with  parameters $[8, 3, 4]$ and weight enumerator $1+6z^{4}+z^{8}.$


$(2)$ If $|A|=2, $ and $|B|=1$, then   $\phi(\mathcal{C}_{L_1})$ is a two-weight binary code with  parameters $[16, 4, 8]$ and  weight enumerator $1+13z^{8}+2z^{12}.$
\end{exa}

\begin{thm} In Theorem 3.6, if $A=B$ and $|A|=m-1$, then the code $\phi(\mathcal{C}_{L_2})$  meets the Griesmer bound with equality.
\end{thm}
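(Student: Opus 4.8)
The plan is to specialise the Lee weight distribution of $\mathcal{C}_{L_2}$ recorded in Table 4 of Theorem 3.6 to the case $A=B$ with $|A|=m-1$, then transport the resulting parameters through the Gray map $\phi$ and check the Griesmer inequality is an equality by a direct geometric-series computation. Throughout I set $|A|=|B|=|A\cup B|=m-1$, so that $2^{m-|A|}=2^{m-|B|}=2^{m-|A\cup B|}=2$.

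First I would substitute these values into each of the six weight-bearing rows of Table 4 (those beyond the zero row) and watch the degeneracies. The key point is that several frequencies collapse to $0$: the factors $2^{m-|B|}-2^{m-|A\cup B|}$ and $2^{m-|A|}-2$ both vanish, so the rows carrying weights $(2^m-2^{|A|})(2^m-2^{|B|})+2^{|B|-1}(2^m-2^{|A|})$, $(2^m-2^{|A|})(2^m-2^{|B|})-2^{|A|+|B|}$, and $(2^m-2^{|A|})(2^m-2^{|B|})-2^{|A|+|B|-1}$ all have frequency $0$ and disappear. I must in particular confirm that the $-2^{|A|+|B|}$ row, whose weight formally equals $0$ here (since $2^{|A|+|B|}=2^{2m-2}=(2^m-2^{|A|})(2^m-2^{|B|})$), genuinely has frequency $0$, so no spurious zero-weight codewords arise. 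Of the remainder, the weight of the third row becomes $(2^m-2^{|A|})(2^m-2^{|B|})+2^{|B|-1}(2^m-2^{|A|+1})=2^{2m-2}$, coinciding with the last row's weight $(2^m-2^{|A|})(2^m-2^{|B|})=2^{2m-2}$; adding their frequencies $2$ and $2^{2m}-4$ gives $2^{2m}-2$. Thus only two nonzero Lee weights survive: $2^m(2^m-2^{|B|})=2^{2m-1}$ with frequency $1$, and $2^{2m-2}$ with frequency $2^{2m}-2$. The check $1+1+(2^{2m}-2)=2^{2m}$ recovers the size of $\mathcal{C}_{L_2}$. (This is a stronger collapse than the generic $A\subseteq B$ case of Remark 3.8, where four weights persist; the value $|A|=m-1$ forces the two extreme weights to merge.)

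Next I would apply the Gray map. Since $\phi\colon R^m\to\Bbb F_2^{2m}$ is a bijective $\Bbb F_2$-linear isometry from $(R^m,d_L)$ to $(\Bbb F_2^{2m},d_H)$, the image $\phi(\mathcal{C}_{L_2})$ is a binary linear code of the same cardinality $2^{2m}$, hence of dimension $k=2m$; its length is $n=2\,(2^m-2^{|A|})(2^m-2^{|B|})=2^{2m-1}$, and its minimum Hamming distance equals the least nonzero Lee weight, namely $d=2^{2m-2}$. So $\phi(\mathcal{C}_{L_2})$ has parameters $[\,2^{2m-1},\,2m,\,2^{2m-2}\,]$.

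Finally I would evaluate the Griesmer sum for $d=2^{2m-2}$ and $k=2m$. For $0\le i\le 2m-2$ the quotient $d/2^i=2^{2m-2-i}$ is an integer, contributing $2^{2m-2-i}$, while for $i=2m-1$ we get $\lceil 2^{2m-2}/2^{2m-1}\rceil=\lceil 1/2\rceil=1$. Hence $\sum_{i=0}^{2m-1}\lceil d/2^i\rceil=\sum_{i=0}^{2m-2}2^{2m-2-i}+1=(2^{2m-1}-1)+1=2^{2m-1}=n$, so the Griesmer bound is attained with equality. I expect the only delicate step to be the bookkeeping in the specialisation — verifying precisely which frequencies vanish and that the two surviving weights coincide, so that the code is two-weight with minimum distance exactly $2^{2m-2}$; once the parameters are pinned down the Griesmer verification is a routine telescoping sum.
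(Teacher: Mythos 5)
Your proposal is correct and follows essentially the same route as the paper: specialise the parameters of Theorem 3.6 to $A=B$, $|A|=m-1$ to get an $[2^{2m-1},2m,2^{2m-2}]$ binary code, then verify the Griesmer sum telescopes to $2^{2m-1}=n$. You are in fact more careful than the paper, which simply asserts the parameters without tracking which rows of Table~4 degenerate; your bookkeeping (including the check that the formally zero-weight row has frequency $0$ and that two rows merge into the single weight $2^{2m-2}$) is accurate, apart from a harmless mislabeling of Remark~3.7 as ``Remark 3.8.''
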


\begin{proof} By Theorem 3.6, if $A=B$ and $|A|=m-1$, then the code $\phi(\mathcal{C}_{L_2})$ has the following parameters:
$$n=2^{2m-1}, k=2m, d=2^{2m-2}.$$ Therefore \begin{eqnarray*}&&\sum_{i=0}^{2m-1}\bigg\lceil {\frac{2^{2m-2}}{2^i}} \bigg\rceil
=\sum_{i=0}^{2m-2}\bigg\lceil {\frac{2^{2m-2}}{2^i}}\bigg\rceil+1
=(2^{2m-1}-1)+1
=2^{2m-1}.
\end{eqnarray*}

This completes the proof.
\end{proof}




The following are some  examples.

\begin{exa} Suppose that $m=3$ in Theorem 3.6. 



$(1)$ If  $|A|=1$, $|B|=2$ and $A\cap B=\emptyset$, then  $\phi(\mathcal{C}_{L_2})$ is a four-weight binary code with  parameters $[48, 6, 20]$ and  weight enumerator $1+6z^{20}+54z^{24}+z^{32}+2z^{36}.$



$(2)$ If   $|A|=|B|=2$ and $A=B$,  then  $\phi(\mathcal{C}_{L_2})$ is a two-weight binary code with  parameters $[32, 6, 16]$ and weight enumerator $1+62z^{16}+z^{32}.$ In fact, the code $\phi(\mathcal{C}_{L_2})$ meets the Griesmer bound with equality.
\end{exa}

\begin{exa} Suppose that $m=3$ in Corollary 3.9. 

$(1)$ If $|A|=1$,  then  $\phi(\mathcal{C}_{L_2})$ is a five-weight binary code with parameters $[12, 6, 3]$ and weight enumerator $1+8z^{3}+6z^{4}+16z^{6}+24z^{7}+9z^{8}.$ In fact,  $\phi(\mathcal{C}_{L_2})$ is almost optimal according to  \cite{G2}.

$(2)$ If $|A|=2$,  then  $\phi(\mathcal{C}_{L_2})$ is a four-weight binary code with  parameters $[8, 6, 2]$ and weight enumerator $1+12z^{2}+38z^{4}+12z^{6}+z^{8}.$
In fact,  $\phi(\mathcal{C}_{L_2})$ is distance optimal according to  \cite{G2}.
\end{exa}





\section{Concluding remarks}
The main contributions of this paper are  the following

\begin{itemize}
\item Two constructions of  codes  over $\Bbb F_2+u\Bbb F_2$, where $u^2=0$, defined in  (1.1) and (1.2)  associated with simplicial complexes.

\item The determination of the Lee weight distributions of two classes of the codes over $\Bbb F_2+u\Bbb F_2$ when these simplicial complexes are all generated by a single maximal element (Theorems 3.2, 3.4, 3.6, and 3.8).

\item A class of binary distance optimal codes (Theorem 4.1) and an infinite family of  binary optimal   codes meeting the  Griesmer bound (Theorem 4.3).
\end{itemize}

 It is worthy to note that in \cite{LM,S2} the authors obatined a class of two-Lee-weight  codes over $\Bbb F_2+u\Bbb F_2$ with parameters $[2^{2m}-2^m,2m]$ and Lee weight enumerator $1+(2^{2m}-2^m)z^{2^{2m}-2^m}+(2^m-1)z^{2^{2m}}$.
 It is easy to verify that the above result is an immediate consequence of Corollary 3.4. To the best of our knowledge, the two-Lee-weight  codes presented in Remark 3.3 and Corollaries 3.4 and 3.5 are new and have flexible parameters.

By massive computation, some binary optimal  codes can be also  found from Theorems 3.2 and 3.6. Very recently, Hyun et al. extended the construction of linear codes to posets in \cite{HKWY}.
It would be interesting to find more binary optimal   codes by employing  posets.

\bigskip
\section*{Acknowledgments}
This research was supported by the National Natural Science Foundation of China (No. 61772015), the Foundation of Science and Technology on Information Assurance Laboratory (No. KJ-17-010).  The authors are very grateful to the reviewers and the Associate Editor Prof. G. Matthews for their valuable comments and suggestions
to improve the quality of this paper.


\begin{thebibliography}{4}

  \bibitem{ADHK}  A. R. Anderson, C. Ding, T. Helleseth, T. Kl\o ve, ``How to build robust shared control systems'',  {\em Des. Codes Cryptogr.}, vol. 15, pp. 111-124, 1998.
  
  \bibitem{CDY} C. Carlet, C. Ding, J. Yuan, Linear codes from highly nonlinear functions and their secret sharing schemes, IEEE Trans. Inf. Theory, 51: 2089-2102, 2005.


\bibitem{CH} S. Chang,  J. Y. Hyun,  Linear codes from simplicial complexes,  Des. Codes Cryptogr.,  86:  2167-2181, 2018.
  

 \bibitem{D1} C. Ding, Linear codes from some 2-designs, IEEE Trans. Inf. Theory, 61(6):  3265-3275, 2015.

  \bibitem{DHKW} C. Ding, T. Helleseth, T. Kl\o ve, X. Wang, A general construction of  Cartesian authentication codes, IEEE Trans. Inf. Theory, 53: 2229-2235, 2007.

 \bibitem{DN} C. Ding, H. Niederreiter, Cyclotomic linear codes of order 3, IEEE Trans. Inf. Theory, 53(6):  2274-2277, 2007.
 
\bibitem{G2} M. Grassl,  Bounds on the minimum distance of linear codes.  http://www.codetables.de.
 
 \bibitem{G} J. H. Griesmer, A bound for error correcting codes, IBM J. Res. Dev., 4: 532–542, 1960. 

 \bibitem{HY1}  Z. Heng, Q. Yue,  A class of binary linear codes with at most three weights,   IEEE Commun. Lett., 19: 1488-1491,  2015.



  


  
    \bibitem{HP}  W. C. Huffman, V. Pless, Fundamentals of Error-Correcting Codes, Cambridge University Press, Cambridge, 2003.
  
    \bibitem{HLL} J. Y. Hyun,  J. Lee,  Y. Lee, Infinite families of optimal linear codes constructed from simplicial complexes, submitted for publication.
  
  
     \bibitem{HKWY} J. Y. Hyun,  H. K. Kim, Y. Wu, Q. Yue, Optimal minimal linear codes from posets, submitted for publication.
  

  
  
 
  
  
  \bibitem{LYL}  C. Li, Q. Yue,   F. Li,  Weight distributions of cyclic codes with respect to pairwise coprime order elements, Finite Fields Appl.,  28: 94-114, 2014.
  
  \bibitem{LM}  H. Liu,  Y. Maouche, Two or few-weight trace codes over $\Bbb F_q+u\Bbb F_q$,  IEEE Trans. Inf. Theory, 65(5):  2696- 2703, 2019.
  
  \bibitem{LM2}  H. Liu,  Y. Maouche, Several new classes of linear codes with few weights, Cryptogr. Commun., 11: 137-146, 2019.
  
  
 \bibitem{LS}  Y. Liu, M. Shi, P. Sol$\acute{e}$, Two-weight and three-weight codes from trace codes over
$\Bbb F_p + u\Bbb F_p + v\Bbb F_p + uv\Bbb F_p$, Discrete Math., 341: 350-357, 2018.
  
  
  
  




 








 

 


 
 
 

  

   
   




 
 
 
  
 
 
 
 
 




 
 
 
 
 
 
 


  
  
  
  
 





 
   
  
 
  
 









 



 \bibitem{S1} M. Shi, Y. Guan,  P. Sol$\acute{e}$, Two new families of two-weight codes, IEEE Trans. Inf. Theory, 63(10):  6240–6246, 2017.

\bibitem{S2} M. Shi, Y. Liu, P. Sol$\acute{e}$, Optimal two-weight codes from trace codes
over $\Bbb F_2+ u\Bbb F_2$,  IEEE Commun. Lett., 20(12): 2346-2349, 2016.



\bibitem{S3} M. Shi, Y. Liu, P. Sol$\acute{e}$, Optimal two weight codes from trace codes over a non-chain
ring, Discrete Appl. Math., 219: 176-181, 2017.

\bibitem{S3} M. Shi, L. Qian, P. Sol$\acute{e}$, Few-weight codes from trace codes over a local ring,
Appl. Algebra Engrg. Comm. Comput., 29 (2018), 335-350.


\bibitem{S4}  M. Shi, R. Wu, Y. Liu,  P. Sol$\acute{e}$, Two and three weight codes over
$\Bbb F_p+ u\Bbb F_p$, Cryptogr. Commun., 9(5):  637-646, 2017.


\bibitem{S6}  M. Shi, R.  Wu, L.  Qian, S. Lin, P. Sol$\acute{e}$, New classes of $p$-ary few weights codes, Bull. Malays. Math. Sci. Soc., 42(4): 1393-1412, 2019.



\bibitem{S5} M. Shi, H.  Zhu, P. Sol$\acute{e}$, Optimal three-weight cubic codes, Appl. Comput. Math.,  17(2): 175-184, 2018.










 
 
 
 
 
 











 

 
 





 

 
 



 


 
  
 
    
    
    
 
  
  
  
  
 
 

 

   

 

 


 



 
 

 

 




















\end{thebibliography}
\end{document}